\documentclass[11pt]{amsart}

\usepackage[foot]{amsaddr}

\usepackage{hyperref}
\usepackage{algpseudocode,algorithm,algorithmicx}

\usepackage{amsfonts,amsmath,amssymb,amsthm}
\usepackage{geometry,fullpage}
\usepackage{mdframed}
\usepackage{wrapfig}
\newcommand{\erclogowrapped}[1]{%
\setlength\intextsep{0pt}%
\begin{wrapfigure}[3]{r}{#1*\real{1.1}}%
\includegraphics[width=#1]{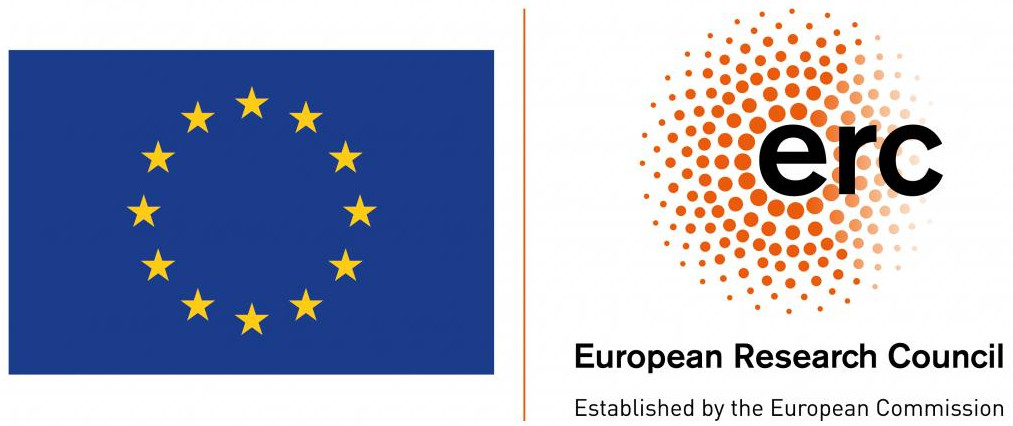}%
\end{wrapfigure}%
}

\usepackage[colorinlistoftodos,prependcaption,textsize=tiny]{todonotes}

\usepackage{amsmath,amsfonts,cleveref}

\newtheorem{lem}{Lemma}[section]

\newtheorem{theorem}[lem]{Theorem}

\newtheorem{fact}[lem]{Fact}
\newtheorem{cor}{Corollary}[section]
\newtheorem{remark}{Remark}[section]

\newtheorem{defn}[lem]{Definition}

\newtheorem{claim}[lem]{Claim}
\newtheorem{definition}{Definition}[section]

\newcommand{\set}[1]{\left\{ {#1} \right\}}
\newcommand{\norm}[1]{{\left\Vert {#1} \right\Vert}}
\newcommand{\paren}[1]{\left( {#1} \right)}
\newcommand{\sparen}[1]{\left[ {#1} \right]}

\newcommand{\op}[1]{\operatorname{#1}}

\newcommand{\rownorm}[1]{\left\| #1 \right\|_{\op{2 \to \infty}}}
\newcommand{\colnorm}[1]{\left\| #1 \right\|_{\op{1 \to 2}}}
 \newcommand{\ip}[2]{\left\langle #1, #2\right \rangle}

\newcommand{\absoluteerror}{\mathsf{err}_{\ell_\infty}}

\newcommand{\ellperror}{\mathsf{err}_{\ell_p}}

\newcommand{\counting}{{\mathsf{count}}}

\newcommand{\ptrace}[1]{\mathsf{Tr}_{\op{p}}(#1 )}
\newcommand{\sliding}{\mathsf{sliding}}

\newcommand{\complex}{\mathbb{C}}
\newcommand{\real}{\mathbb{R}}
\newcommand{\I}{\mathbb{I}}
\newcommand{\E}{\mathop{\mathbf{E}}}
\newcommand{\R}{\mathop{\mathbf{R}}}
\newcommand{\calX}{\mathcal{X}}

\newcommand{\unitary}[1]{\operatorname{U}\left(#1\right)}

\newcommand{\decayingmatrix}{M_f}
\newcommand{\cdsset}{\mathcal{M}}
\newcommand{\dpmechanism}{\mathsf{A}}
\newcommand{\decayingfunctionset}{\mathcal{F}}

\newcommand{\streamlength}{n}

\newcommand{\slidingupperbound}{{1\over 2\streamlength}\sum_{\ell=0}^{2\streamlength-1}\left| {1 - \omega^{W\ell} \over 1-\omega^\ell} \right|}

\newcommand{\countingupperbound}
{1 +  {1 \over \pi} \ln \paren{\streamlength}} 

\newcommand{\multiepochbound}{1 + {1 \over \pi} \ln\paren{
n \over 
b} }
\newcommand{\multi}{\mathsf{striped}}
\newcommand{\mname}{{striped }}

\pdfstringdefDisableCommands{%
  \def\gamma{gamma}%
  \def\({}%
  \def\){}%
  \def\texttt#1{<#1>}%
}

\makeatletter
\pdfstringdefDisableCommands{\let\HyPsd@CatcodeWarning\@gobble}
\makeatother

\title{Improved Differentially Private Continual Observation Using Group Algebra}

\author{Monika Henzinger}
\address{Institute of Science and Technology (ISTA), Austria.}
\email{monika.henzinger@ist.ac.at}

\author{Jalaj Upadhyay}
\address{Rutgers University, U.S.A.}
\email{jalaj.upadhyay@rutgers.edu}

\begin{document}

\begin{abstract}
Differentially private weighted prefix sum under continual observation is a crucial component in the production-level deployment of private next-word prediction for Gboard, which, according to Google, has over a billion users. More specifically, Google uses a differentially private mechanism to sum weighted gradients in its \emph{private follow-the-regularized leader} algorithm. Apart from efficiency, the additive error of the private mechanism is crucial as multiplied with the square root of the model's dimension $d$ (with $d$ ranging up to $10$ trillion, for example, Switch Transformers or M6-10T), it determines the accuracy of the learning system. So, any improvement in leading constant matters significantly in practice.

In this paper, we show a novel connection between mechanisms for continual weighted prefix sum and a concept in representation theory known as the \emph{group matrix} introduced in correspondence between Dedekind and Frobenius (Sitzungsber. Preuss. Akad. Wiss. Berlin, 1897) and generalized by Schur (Journal für die reine und angewandte Mathematik, 1904). To the best of our knowledge, this is the first application of group algebra in the analysis of differentially private algorithms. Using this connection, we analyze a class of matrix norms known as {\em factorization norms} that give upper and lower bounds for the additive error under general $\ell_p$-norms  of the matrix mechanism.
This allows us to give 
\begin{enumerate}
	\item the first efficient factorization that matches the best-known non-constructive upper bound on the factorization norm by Mathias (SIAM Journal of Matrix Analysis and Applications, 1993) for the matrix used in Google's deployment, and also improves on the previous best-known constructive bound of Fichtenberger, Henzinger, and Upadhyay (ICML 2023) and Henzinger, Upadhyay, and Upadhyay (SODA 2023); thereby, partially resolving an open question in operator theory,
	\item the first upper bound on the additive error for a large class of weight functions for weighted prefix sum problems, including the sliding window matrix (Bolot, Fawaz, Muthukrishnan, Nikolov, and Taft (ICDT 2013). We also improve the bound on factorizing the \mname matrix used for outputting a synthetic graph that approximates all cuts (Fichtenberger, Henzinger, and Upadhyay (ICML 2023));
    \item a general improved upper bound on the factorization norms that depend on algebraic properties of the weighted sum matrices and that applies to a more general class of weighting functions than the ones considered in Henzinger, Upadhyay, and Upadhyay (SODA 2024). Using the known connection between these factorization norms and the $\ell_p$-error of continual weighted sum, we give an upper bound on the  $\ell_p$-error for the continual weighted sum problem for $p\ge 2$.
\end{enumerate}
\end{abstract}

\maketitle

\thispagestyle{empty}

\clearpage

\renewcommand{\top}{*}
\pagenumbering{arabic} 

\section{Introduction}
To protect the privacy of the input data, current production-level learning systems at Google deploy differentially private algorithms, which, at their core, solve a (high-dimensional extension of a) private \emph{weighted continual (prefix) sums problem}~\cite{mcmahan2022private, dvijotham2024efficient,mcmahan2022federated}. In this problem, a stream of gradients (or numbers) $x_t$  arrives, one at each time step $t$, and the goal is to output the weighted sum of $x_1,\cdots, x_t$. The unweighted version of this problem, called \emph{continual counting}, was introduced by Dwork, Naor, Pitassi, and Rothblum~\cite{Dwork-continual}, and a sequence of papers have given algorithms for it with the goal of improving their accuracy~\cite{andersson2023smooth,andersson2024improved, honaker2015efficient,fichtenberger2023constant,henzinger2023almost}. This problem also has further \emph{practical applications}, such as continual infectious disease incidence counting~\cite{Dwork-continual}, as well as many applications as a \emph{black box algorithm} in other differential private continual algorithms such as histogram estimation~\cite{cardoso2021differentially, chan2012differentially,epasto2023differentially,jain2021price,HenzingerSS24,henzinger2023hist,upadhyay2019sublinear},  graph analysis~\cite{fichtenberger2021differentially, upadhyay2021differentially}, and clustering~\cite{dupre2024}. 

The weighted version of the problem was introduced by Bolot, Fawaz, Muthukrishnan, Nikolov, and Taft~\cite{bolot2013private} and, more recently, Henzinger, Upadhyay, and Upadhyay~\cite{henzinger2024unifying} gave the first general algorithm for this problem. In this problem for any fixed, publicly known \emph{weight function} $f$, the algorithm has to output the weighted sum $\sum_{i=1}^t f(t-i+1) x_i$ in a differentially private manner at each time step $t$. Both works restrict $f$ to be monotonically non-increasing, called the \emph{decayed continual (predicate/prefix) sums problem}. We call the version of the problem without any constraints on $f$ the  \emph{weighted continual sums problem.} In particular, the result in Henzinger, Upadhyay, and Upadhyay~\cite{henzinger2024unifying} has several limitations that impede its usage in practice:
\begin{enumerate}
\label{itemize:issues}
    \item {\bf Monotonicity and positive valued.} The function $f$ has to be a \textit{monotonically non-increasing function}. While some weight functions satisfy this condition, this is not always true for naturally occurring functions. For example, the importance of data changes with time in the financial market, and this change does not exhibit monotonic behavior. 
    Moreover, the function $f$ has to be \textit{positive real-valued}. While this captures the traditional counting problem, it does not cover many other interesting cases, {such as when $f$ can assume the value 0}  like in the sliding window model and variants. In fact, the original motivation of continual observation, i.e., infectious disease incidence counting, mostly cares about counts in sliding windows~\cite{cdccontact}.
    \label{item:positive} \label{item:monotonic}
    
    \item {\bf Upper bound using Bell's polynomial.} The upper bound on the additive error in \cite{henzinger2024unifying} is stated in terms of Bell's polynomial, which consists of the sum of exponentially many terms and is difficult to compute for many weight functions (even empirically). 
    \label{item:interpretting}
    
    \item {\bf Gap between Upper and Lower Bounds.} 
    The additive error of the continual algorithm depends linearly on the so-called \emph{factorization norms} and \cref{eq:matrixmechanismerror}), which can, thus, be seen as a measure of the ``quality'' of the factorization. 
    For continual counting, the current gap between the upper~\cite{fichtenberger2023constant} and lower bound~\cite{matouvsek2020factorization} on the factorization norm is $\approx 0.68$ for large $n$. Closing this gap is an important problem due to its application in private learning and different areas of mathematics (also see \Cref{rem:operatorTheoryApplication}). 
    \label{item:suboptimality}
\end{enumerate}

In this paper, we address
all these limitations by establishing a close connection between differentially private continual observation and group algebra (Theorem \ref{thm:mainupperboundgamma} and Theorem \ref{thm:mainupperbound}). 

\subsection{Problem Statement.} 
\sloppy 
Let $\mathbb N_+$ denote the set $\set{1, 2, \cdots}$, $\mathbb N$ be the set $\set{0, 1, 2, \cdots}$, and $\real$ denotes the set of real numbers.  
Let $\decayingfunctionset :=\set{f : \mathbb N \to \mathbb R}$ be the set of non-negative valued functions. 
Continually observing all $\streamlength$ values (exactly) corresponds to  computing $\decayingmatrix x$, where $x$ is the $\streamlength$-dimensional vector formed by the stream $x_1, x_2, \dots, x_\streamlength$ and $\decayingmatrix$ for $f \in \decayingfunctionset$ is defined as the following lower-triangular {\em Toeplitz} matrix, i.e, a matrix whose descending diagonals from left to right are constant:
\begin{align}
\decayingmatrix = \begin{cases}
    f(i-j) & i \geq j \\
    0 & \text{otherwise}
\end{cases}.
\label{def:Mf}
\end{align}

We use $\cdsset$ to denote the set of Toeplitz matrices formed as above by functions $f \in \decayingfunctionset$.

Our approach is to use the {\em matrix} (or {\em factorization}) {\em mechanism} with $\decayingmatrix$~\cite{edmonds2020power, li2015matrix}, but we give a different factorization from Henzinger, Upadhyay, and Upadhyay [34], resulting in better accuracy (i.e., smaller additive error), a more general weight function class $\decayingfunctionset$, and simpler upper bounds on several error metrics. 
It has been shown that, depending on the error metric considered, the error of the matrix mechanism can be measured by different factorization norms~\cite{edmonds2020power,liu2024optimality,nikolov2023gaussian}. To define the factorization norms, we first need to define the \textit{generalized $p$-trace}~\cite{nikolov2023gaussian}. For a complex-value matrix $M \in \mathbb C^{d \times n}$, let $N=MM^*$, where $M^*$ denotes the complex conjugate of the matrix $M$. Then, for $p\in[2,\infty)$, the generalized $p$-trace of a matrix $M$ is defined as 
\begin{align*}
    \ptrace{M} =  \left(\sum_{i=1}^d \paren{N[i,i]}^{p/2} \right)^{1/p}. 
\end{align*}

For any $a,b \in \mathbb N$, let $ \norm{M}_{a \to b} = \max_x \frac{\norm{Mx}_b}{\norm{x}_a}$. Then by standard limiting argument in analysis, we have $$\mathsf{Tr}_\infty (M)  = \max_{1 \leq i \leq d} N[i,i]^{1/2} = \rownorm{M}$$ by the definition of row norm of matrix $M$.
We can now define the class of {\em factorization norms} for all $p\in[2,\infty)$:
\begin{align}
\label{eq:generalfactorizationnorm}    
\gamma_{(p)}(M) = \inf \left\{ \ptrace{B} \cdot \colnorm{C}: M = BC  \right\}. 
\end{align}

The minimization problem in \cref{eq:generalfactorizationnorm} has a natural convex optimization perspective~\cite{nikolov2023gaussian}, 
where the output are the matrices $B$ and $C$ that certify the value of $\gamma_{\op{(p)}}(M)$.
However, currently known optimizers do not scale to $n=1$ million, a setting common in modern data analysis or machine learning\footnote{Denisov et al.~\cite{mcmahan2022private} 
published the first work that proposed using the matrix mechanism in private machine learning. It
computes a factorization that minimizing $\gamma_{\op{(2)}}(\decayingmatrix)$ when $f(i)=1$. Even that work only reports results for $n\leq 4096$ (see their Table 2). We believe it is mainly due to the scalability issue of the optimization algorithm.}.
As the $\ell_p$-error of the matrix mechanism (see \cref{eq:ellError}) is upper bounded by the $\gamma_{\op{(p)}}$ norm~\cite{liu2024optimality,nikolov2023gaussian}, by considering all $p \in [2,\infty)$, one can gain an understanding of the additive error not only in expectation but also with respect to the probability tail bounds. Therefore, a natural goal is to find matrices $B$ and $C$ that achieve an optimum $\gamma_{\op{(p)}}$ value without the need to solve a convex program. 

The two most commonly used factorization norms are for $p=\set{2,\infty}$ (even though the notation is counterintuitive):
$\gamma_{\op{F}}(M) := \gamma_{(2)}(M)$ and $\gamma_{\op{2}}(M) := \gamma_{(\infty)}(M).$
We note that the error of private machine learning algorithms based on matrix mechanism is stated in terms of these two norms~\cite{mcmahan2022private, dvijotham2024efficient}.  To avoid the need for deploying an optimization algorithm, a significant research effort in the recent past has been to find and analyze explicit factorizations with as tight as possible bounds on factorization norms~\cite{dvijotham2024efficient,fichtenberger2023constant,henzinger2024unifying,liu2024optimality}. 

\subsection{Main Result.}
Our main result is a new explicit factorization for $\decayingmatrix$ {for all $\decayingmatrix \in \cdsset$} that addresses the issues mentioned on page \pageref{itemize:issues}: (i) Our bound for the additive error of our factorization is for \emph{all} real-valued weight functions $f$ and not only for monotonic positive-valued functions, thereby addressing concerns mentioned in  \cref{item:positive}; (ii) for \cref{item:interpretting}, our bound is in terms of the algebraic property of the matrix, making it (arguably) \emph{more interpretable}; and (iii) our bounds improve upon the bounds in earlier work and matches the best-known non-constructive bound for one of the most important matrices, namely the unweighted one, partially resolving \cref{item:suboptimality}.

More specifically, our main bound on the factorization norm is stated in terms of the evaluation of a specific polynomial defined uniquely by the entries specifying the Toeplitz matrix.  In our case, for matrix $\decayingmatrix \in \cdsset$, we define a polynomial whose coefficients are entries of $M_f$: 
\begin{align}
m_f(x) = f(0) + f(1) x + f(2)x^2 + \cdots + f\paren{n-1}x^{n-1}.
\label{eq:decaypolynomialHenzingerUU24}
\end{align}

\begin{theorem} 
\label{thm:mainupperboundgamma}
Given a function $f:\mathbb N \to \mathbb R$ and an $n \in \mathbb N_+$, for the matrix $\decayingmatrix$ as defined above, there is an efficient algorithm, described as \Cref{alg:factorization}, that outputs a factorization of $\decayingmatrix = LR$ such that $L$ is a lower-triangular matrix. Let $m_f(\cdot)$ be the polynomial defined in \cref{eq:decaypolynomialHenzingerUU24} and $\omega=e^{\pi \iota/\streamlength} \in \complex$ be the $2\streamlength$-th root of unity, i.e., the solutions of $z^{2\streamlength}-1=0$. Then for all integers $p\in [2,\infty)$,   
 \begin{align}
 \begin{split}
     \gamma_{\op{(p)}}(\decayingmatrix) \leq \ptrace{L}\colnorm{R} \leq  {1 \over 2\streamlength^{1-1/p}} \sum_{k=0}^{2\streamlength-1}\left| m_f(\omega^k) \right|. 
\end{split}
 \label{eq:mainupperboundgammanorm}
 \end{align}  
\noindent In particular, $\gamma_{\op{2}}(\decayingmatrix) = \gamma_{(\infty)}(\decayingmatrix) \leq {1 \over 2\streamlength} \sum_{k=0}^{2\streamlength-1}\left| m_f(\omega^k) \right|$ and $\gamma_{\op F}(\decayingmatrix)  \leq {1 \over 2\sqrt{\streamlength}} \sum_{k=0}^{2\streamlength-1}\left| m_f(\omega^k) \right|$.
\end{theorem}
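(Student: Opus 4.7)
The plan is to identify $\decayingmatrix$ with the upper-left $n \times n$ block of a $2n \times 2n$ circulant matrix and then diagonalize that circulant via the discrete Fourier transform---the regular representation of the cyclic group $\mathbb{Z}/2n\mathbb{Z}$, which is the group-algebra connection promised in the abstract. Concretely, let $C$ be the $2n \times 2n$ circulant whose first column is $\paren{f(0), f(1), \ldots, f(n-1), 0, \ldots, 0}^*$, so that $\decayingmatrix = P_n C P_n^*$ where $P_n := [I_n \mid 0]$ is the coordinate projection. Let $F$ denote the unitary $2n$-point DFT matrix, so that $|F[j,k]|^2 = 1/(2n)$ for all $j,k$. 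The standard diagonalization of circulants gives $C = F D F^*$, where $D$ is diagonal with $D[k,k] = m_f(\omega^k)$ for $\omega = e^{\pi \iota/n}$.

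To turn this spectral decomposition into a useful factorization, I take the polar form $D = U |D|$, where $U$ is the diagonal unitary containing the phases of the entries of $D$, and set $B := F |D|^{1/2}$ and $B' := U |D|^{1/2} F^*$. Then $BB' = F D F^* = C$ by the commutativity of diagonal matrices, giving $\decayingmatrix = B_1 B_2$ with $B_1 := P_n B$ and $B_2 := B' P_n^*$. Because $|F[j,k]|^2 = 1/(2n)$ uniformly, a direct computation shows that every diagonal entry of $BB^* = F |D| F^*$, and likewise of $B'^* B' = F |D| F^*$, equals exactly $\frac{1}{2n}\sum_{k=0}^{2n-1}|m_f(\omega^k)|$; restricting to the first $n$ coordinates yields $(B_1 B_1^*)_{ii} = (B_2^* B_2)_{jj} = \frac{1}{2n}\sum_k |m_f(\omega^k)|$ for every $i, j \in [n]$. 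Plugging these values into the definitions of $\ptrace{\cdot}$ and $\colnorm{\cdot}$ immediately gives $\ptrace{B_1}\colnorm{B_2} \leq \frac{1}{2n^{1-1/p}} \sum_k |m_f(\omega^k)|$, matching the claimed bound.

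The final step is to replace $B_1$ with a lower-triangular $L$, which I would achieve via an LQ decomposition $B_1 = L Q$ with $L \in \complex^{n \times n}$ lower-triangular and $Q \in \complex^{n \times 2n}$ satisfying $QQ^* = I_n$ (such a $Q$ always exists by extending any orthonormal basis of the row span of $B_1$ to $n$ orthonormal vectors in $\complex^{2n}$). Setting $R := Q B_2$ gives $\decayingmatrix = LR$ with $L$ lower-triangular. Since $QQ^* = I_n$, we have $LL^* = L Q Q^* L^* = B_1 B_1^*$, so $\ptrace{L} = \ptrace{B_1}$; and since $Q^*Q$ is only an orthogonal projection, $(R^*R)_{jj} = \|Q B_2 e_j\|_2^2 \leq \|B_2 e_j\|_2^2$, so $\colnorm{R} \leq \colnorm{B_2}$. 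Multiplying these two inequalities preserves the bound and finishes the proof. The main obstacle I expect is not the numerical estimate---once the group-algebra embedding into a circulant is in place it essentially writes itself---but rather verifying that \Cref{alg:factorization} realizes this factorization \emph{efficiently}, meaning via FFT-type operations rather than by explicitly materializing the $2n \times 2n$ matrix $F$ and the LQ step, and handling any rank-deficiency in $B_1$ (which is purely a bookkeeping issue and does not affect the numerical bound).
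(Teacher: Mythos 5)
Your proposal is correct, and it is at heart the same construction as the paper's, repackaged in Fourier-analytic language: your $2n\times 2n$ circulant $C$ is exactly the paper's group-pattern matrix $\widehat\decayingmatrix$ for the cyclic group of $2n$-th roots of unity, and your factorization $C=(F|D|^{1/2})(U|D|^{1/2}F^*)$ is the DFT-diagonalized form of the paper's choice $b_f(x)=c_f(x)=\frac{1}{2n}\sum_\ell \sqrt{m_f(\omega^\ell)}\,x^\ell$ justified there via Chalkley's theorem on products of group-pattern matrices; likewise your final LQ step with $QQ^*=\mathbb I_n$ is the paper's Gram--Schmidt step from Denisov et al. Your phrasing does buy something: the polar decomposition $D=U|D|$ together with $|F[j,k]|^2=1/(2n)$ gives the diagonal entries of $B_1B_1^*$ and $B_2^*B_2$ \emph{exactly} as $\frac{1}{2n}\sum_k|m_f(\omega^k)|$, which is cleaner than the paper's row-norm estimate for $\widetilde L[i,j]=b_f(\omega^{j-i})$ (whose intermediate inequality is really Parseval in disguise), and it makes the FFT-based implementability transparent. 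The one substantive piece you omit is that \Cref{alg:factorization} is required to output \emph{real} factors (the downstream mechanism adds a real Gaussian and must release real answers), whereas your $L$ and $R$ are complex; the paper handles this before the triangularization step by splitting $\widetilde L=\widetilde L_r+\iota\widetilde L_c$, $\widetilde R=\widetilde R_r+\iota\widetilde R_c$ and using $\widehat L=(\widetilde L_r\ \ \widetilde L_c)$, $\widehat R=(\widetilde R_r;\,-\widetilde R_c)$, which preserves $\ptrace{\cdot}$ and $\colnorm{\cdot}$ exactly since $|a+\iota b|^2=a^2+b^2$; inserting this step into your argument (and only then doing the LQ decomposition) closes the gap without changing your bound.
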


\begin{remark}
[Generality of approach]
     For the ease of presentation and calculation for special cases mentioned below, in Theorem~\ref{thm:mainupperboundgamma} and throughout this paper, we consider the multiplicative group formed by the $2\streamlength$-th roots of unity under the binary operation of complex multiplication. However, our result can be easily extended to any multiplicative group $G$ of order $2\streamlength$ with $\omega$ replaced by any generator, $g$ of a group $G$ or order $2\streamlength$.  
\end{remark}

\begin{remark}
    [Known vs Unknown $n$] 
    \label{rem:unknown}
    \Cref{alg:factorization} relies on knowing the value of $n$ ahead. In contrast, Henzinger, Upadhyay, and Upadhyay~\cite{henzinger2024unifying} do not require the value of $n$. One can use the technique proposed in Chan et al.~\cite{chan2011private} in conjunction with \Cref{alg:factorization} to deal with unknown $n$, but that comes at the cost of increasing the constant. 
\end{remark}

To compare our  bounds, we consider the cases for which closed-form bounds are known:

\textbf{Case 1: $f(i)=1$}. In this case, $M_\counting := \decayingmatrix$, and, we have 
$\gamma_{\op 2}(M_\counting) \leq \countingupperbound$ (\Cref{sec:specialMatrices}).  
Before our work, the best-known constructive upper bound (i.e., an upper bound for which a factorization is known)
was $\gamma_2(M_\counting) \leq 1 + {\gamma + \ln(n) \over \pi}
$~\cite{fichtenberger2023constant}, where $\gamma \approx 0.58$ is Euler-Mascheroni's constant. Note that our bound is an additive term $\gamma/\pi$ better than that constructive bound, and matches the best known non-constructive bound~\cite{mathias1993hadamard}. Recently, Dvijotham, McMahan, Pillutla, Steinke, and Thakurta~\cite{dvijotham2024efficient} gave a factorization that is a  (multiplicative) $1+o(1)$ factor worse than that of Fichtenberger, Henzinger, and Upadhyay~\cite{fichtenberger2023constant}, but is more space and time efficient. On the other hand, the best-known lower bound  is  
$\gamma_{\op 2}({M_\counting}) \geq {\ln((2\streamlength+1)/3) + 2 \over \pi}$~\cite{matouvsek2020factorization} (also see \cite{dvijotham2024efficient} for a discussion).   
Note that there is still a small additive gap between our upper bound on $\gamma_{\op 2}(M_\counting)$ and the best known lower bound. 
Closing this gap is an interesting open problem.

\begin{remark}
\label{rem:operatorTheoryApplication}
    Computing the exact values of $\gamma_{\op{(p)}}(M_\counting)$ has long intrigued operator theorists due to its application in non-commutative matrix algebra~\cite{junge2005best}, symplectic capacity~\cite{gluskin2019symplectic}, compact operators~\cite{aleksandrov2023triangular,kato1973continuity}, absolute summing problems~\cite{gordon1974absolutely}, etc. This has led to a significant effort in computing the exact value of its factorization norms~\cite{bennett1977schur,davidson1988, kwapien1970main, mathias1993hadamard}. 
    Theorem~\ref{thm:mainupperboundgamma} makes progress on this question as it reduces the gap between the upper and lower bound for $\gamma_{\op{2}}(M_\counting)$. 
\end{remark}

\sloppy
{\textbf{Case 2: Sliding window and striped settings.} 
Next, we consider the weight function of  (a) the sliding window model with window size $W$, and (b) of the $b$-\mname matrix for a positive integer $b$ that is used by Fichtenberger, Henzinger, and Upadhyay~\cite{fichtenberger2023constant} to output,  in the continual setting, a synthetic graph that approximates all graph cuts up to an additive error.
They are defined, respectively, as follows: 
 \begin{align}
 M_\sliding[i,j] = \begin{cases} 1 & 0 \leq i-j \leq W \\
 0 & \text{otherwise} 
 \end{cases}, \quad \text{and} \quad M_\multi[i,j] = \begin{cases} 1 & i\equiv j \mod b \text{ and } j \leq i \\
 0 & \text{otherwise} 
 \end{cases}.
 \label{eq:sliding}
 \end{align}

Our factorization achieves $\gamma_{\op 2}(M_\sliding) \leq  \slidingupperbound$, where $\omega$ is the $2\streamlength$-th root of unity, and $\gamma_{\op{2}}(M_\multi) \leq \multiepochbound$ (see 
Section~\ref{sec:specialMatrices} for a proof).
For $M_\sliding$, none of the previous results compute the $\gamma_{\op 2}$ norm either constructively or non-constructively. Thus, our theorem is the first result that gives an explicit factorization and upper bounds for $M_\sliding$.  We also show in Theorem~\ref{thm:lowerSlidingWindow} that $\gamma_{\op 2}(M_\sliding) \geq {\ln((2W+1)/3) + 2 \over \pi}$. For $M_\multi$, the best prior upper bound was $1 + {\gamma + \ln(n/b) \over \pi}$~\cite{fichtenberger2023constant}, where $\gamma \approx 0.58$ is Euler-Mascheroni's constant. Similarly, using Matousek et al.~\cite{matouvsek2020factorization} and Haagerup~\cite{haagerup1980decomposition}, we can observe that  
$\gamma_{\op 2}({M_\multi}) \geq {\ln((2\streamlength/b+1)/3) + 2 \over \pi}$. 

\subsection{Differentially Private Continual Observation.}
In the static setting, algorithms for computing the product of a public matrix $M$ and a privately-given vector $x$ are well studied, and their quality is usually measured by the {\em (additive) mean-squared error} (aka $\ell_2^2$-error) and the \emph{(additive) absolute error} (aka $\ell_\infty$-error).  There has been some recent interest in studying $\ell_p$-error metric as well~\cite{aumuller2023plan,lebeda2023diffrentially,liu2024optimality,nikolov2023gaussian} for any $p\in [2,\infty)$. In this paper, we study the $\ell_p$-error for any $p\in [2,\infty)$. The (additive) $\ell_p$-error of a randomized algorithm $\mathsf{A}(M;\streamlength)$  for computing $Mx$ on any real input vector $x\in \real^\streamlength$ is defined as 
\begin{align}
\label{eq:ellError}
\ellperror(\mathsf A(M;\streamlength))
= \max_{ x \in \mathbb R^\streamlength} \paren{\E_{\mathsf A} \sparen{\norm{\mathsf{A}(M;\streamlength)(x) - M x}_{p}^p}}^{1/p}.
\end{align}

We use {\em differential privacy} as the notion of privacy: 
\begin{definition}
[Differential privacy~\cite{dwork2016calibrating}]
\label{defn:dp}
Let $\mathsf A : X \rightarrow R$  be a randomized algorithm mapping from a domain $X$ to a range $R$. $\mathsf{A}$ is $(\epsilon,\delta)$-differentially private if for every all neighboring dataset $x$ and $x'$ and every measurable set $S \subseteq R$,
$$\mathsf{Pr} [\mathsf{A} (x) \in S] \leq e^\epsilon \mathsf{Pr} [\mathsf{A} (x') \in  S] + \delta.$$
\end{definition}

The above definition relies on the notion of neighboring datasets. We use the standard notion of neighboring datasets.  For continual observation,  two streams, $S = (x_1,\cdots, x_\streamlength) \in \real^\streamlength$ and $S' = (x_1',\cdots, x_\streamlength') \in \real^\streamlength$ are neighboring if there is at most one $1 \leq i \leq n$ such that $x_i \neq x_i'$. This is known as {\em event level privacy}~\cite{chan2011private,Dwork-continual}.

Our work extends and improves the recent works that used the matrix mechanism for performing continual observation~\cite{choquette2022multi, mcmahan2022private, dvijotham2024efficient,fichtenberger2023constant,henzinger2023almost, henzinger2024unifying}. Given a linear query matrix $M \in \real^{q \times n}$, the {\em matrix} (or {\em factorization}) mechanism~\cite{edmonds2020power,li2015matrix} first computes factors $L$ and $R$ such that $LR=M$. The answer to the linear query on a given input $x \in \real^\streamlength$ is then computed by returning $\mathsf A_{\op{fact}}(M;\streamlength) (x)= L(Rx+z) = Mx + Lz$ for an appropriately scaled Gaussian vector $z \sim \mathcal N(0, \sigma_{\epsilon,\delta}^2 \colnorm{R}^2 \mathbb I)$, where $\sigma_{\epsilon,\delta} = \frac{2}{\epsilon} \sqrt{{\frac{4}{9} +  \ln \paren{\frac{1}{\delta}\sqrt{\frac{2}{\pi}}}}}$ to preserve $(\epsilon,\delta)$-differential privacy~\cite{dwork2006our}. Using the concentration of Gaussian vectors~\cite{liu2024optimality}, we have, for any $q \in \mathbb N_{+}$ and $p\in [2,\infty)$,
\begin{align}
    \ellperror(\mathsf A(M;q)) \leq \sigma_{\epsilon,\delta}\gamma_{(p)} (M) \min\{p,{\ln^{1/2}(q)} \}. 
\label{eq:matrixmechanismerror}
\end{align}

Together with Theorem \ref{thm:mainupperboundgamma}, \cref{eq:matrixmechanismerror} directly implies the following result:

\begin{theorem}
\label{thm:mainupperbound}
For any $0 <\epsilon,\delta <1$ and $M_f \in \mathcal M$, there is an efficient  $(\epsilon,\delta)$-differentially private continual counting algorithm $\mathsf A_{\op{fact}}$, described in \Cref{alg:factorizationmechanism}, that, on receiving a stream of values $x_t$ from $[-\Delta, \Delta]$ of length $\streamlength$ and a weight function $f:\mathbb N \to \mathbb R$, achieves the following error bound for every  $p\in [2,\infty)$:  
 \begin{align}
 \begin{split}
    \ellperror(\mathsf A_{\op{fact}}(\decayingmatrix;\streamlength)) \leq {\sigma_{\epsilon,\delta}  \min\{\sqrt p,\sqrt{\ln(n)}\}\over 2\streamlength^{1-1/p}} \sum_{k=0}^{2\streamlength-1}\left| m_f(\omega^k) \right|, \quad \text{where}~\sigma_{\epsilon,\delta} = {2 \Delta  \over \epsilon} \sqrt{\ln\paren{1.25\over\delta}}
\end{split}
 \label{eq:mainupperbound}
 \end{align}  
 is the variance required by the Gaussian mechanism to preserve $(\epsilon,\delta)$-differential privacy. 
\end{theorem}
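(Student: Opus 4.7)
The theorem is essentially a direct corollary of Theorem~\ref{thm:mainupperboundgamma} together with the matrix-mechanism $\ell_p$-error bound recorded in \cref{eq:matrixmechanismerror}. The plan is to specify $\mathsf{A}_{\op{fact}}$ as the factorization mechanism of \cite{edmonds2020power,li2015matrix} instantiated with the factorization $\decayingmatrix = LR$ produced by \Cref{alg:factorization}: on input $x \in [-\Delta,\Delta]^n$, compute $y = Rx$, sample a Gaussian vector $z \sim \mathcal{N}(0, \sigma_{\epsilon,\delta}^2 \colnorm{R}^2 \mathbb{I})$ with $\sigma_{\epsilon,\delta} = \tfrac{2\Delta}{\epsilon}\sqrt{\ln(1.25/\delta)}$, and release $L(y+z) = \decayingmatrix x + Lz$.

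For the privacy analysis, I would verify $(\epsilon,\delta)$-differential privacy via the classical Gaussian mechanism. For event-level neighbors $x,x'\in[-\Delta,\Delta]^n$ that differ only at some coordinate $i$, one has $\|R(x-x')\|_2 \leq |x_i - x_i'|\cdot\|R_{:,i}\|_2 \leq 2\Delta\cdot\colnorm{R}$, using that $\colnorm{R}$ equals the maximum $\ell_2$-norm of any column of $R$. Hence releasing $Rx+z$ with the stated $\sigma_{\epsilon,\delta}$ is $(\epsilon,\delta)$-DP, and the subsequent multiplication by $L$ is deterministic post-processing, which preserves the guarantee.

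For the utility analysis, the error vector is $\mathsf{A}_{\op{fact}}(\decayingmatrix;n)(x) - \decayingmatrix x = Lz$, so the $\ell_p$-tail bound for Gaussian factorization mechanisms stated in \cref{eq:matrixmechanismerror} (from \cite{liu2024optimality,nikolov2023gaussian}) gives
$$\ellperror(\mathsf{A}_{\op{fact}}(\decayingmatrix;n)) \leq \sigma_{\epsilon,\delta}\cdot\gamma_{(p)}(\decayingmatrix)\cdot\min\{p,\sqrt{\ln n}\}.$$
Substituting the explicit bound $\gamma_{(p)}(\decayingmatrix) \leq \tfrac{1}{2n^{1-1/p}}\sum_{k=0}^{2n-1}|m_f(\omega^k)|$ from Theorem~\ref{thm:mainupperboundgamma} then yields the claimed inequality.

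There is no real obstacle here; the only subtle point is that \cref{eq:matrixmechanismerror} must be invoked with the generalized $p$-trace $\ptrace{L}$ in the leading factor rather than the Frobenius norm, which is exactly why the factorization norm $\gamma_{(p)}$ (and not $\gamma_{\op F}$) governs the $\ell_p$-error for general $p\in[2,\infty)$. Once that alignment with Theorem~\ref{thm:mainupperboundgamma} is kept straight, the remainder of the proof is pure substitution.
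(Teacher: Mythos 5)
Your proposal matches the paper's own argument: the paper derives Theorem~\ref{thm:mainupperbound} exactly as you do, by instantiating the factorization mechanism with the factors $L,R$ from \Cref{alg:factorization}, invoking the Gaussian mechanism for $(\epsilon,\delta)$-privacy (with the $\Delta$-scaled sensitivity through $\colnorm{R}$ and post-processing by $L$), and then plugging the bound on $\gamma_{(p)}(\decayingmatrix)$ from Theorem~\ref{thm:mainupperboundgamma} into the $\ell_p$-error bound of \cref{eq:matrixmechanismerror}. No substantive difference from the paper's route, so nothing further is needed.
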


\begin{cor}
\label{cor:continualcounting}
Running $\mathsf A_{\op{fact}}$ on a stream of values from $[-\Delta, \Delta]$ of length $\streamlength$, we have (i) for 
continual counting $\absoluteerror(\mathsf A_{\op{fact}}(M_\counting;\streamlength)) \leq {\sigma_{\epsilon,\delta}} \paren{ \countingupperbound } \sqrt{\ln(n)}$; (ii) for the sliding window model 
    $\absoluteerror(\mathsf A_{\op{fact}}(M_\sliding;\streamlength)) \leq {\sigma_{\epsilon,\delta}} \paren{ \slidingupperbound } \sqrt{\ln(n)}$; and (iii) for the striped setting 
    $\absoluteerror(\mathsf A_{\op{fact}}(M_\multi;\streamlength)) \leq {\sigma_{\epsilon,\delta}} \paren{  \multiepochbound} \sqrt{\ln(n)}.$
 \label{eq:maincontinual}
 \label{eq:mainsliding}
\end{cor}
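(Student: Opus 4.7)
The plan is to specialize Theorem~\ref{thm:mainupperbound} to each of the three weight functions and then evaluate the sum $\frac{1}{2n}\sum_{k=0}^{2n-1}|m_f(\omega^k)|$ in closed form. Taking $p\to\infty$ in \eqref{eq:mainupperbound} (or equivalently combining the $p=\infty$ case of \eqref{eq:matrixmechanismerror} with the $\gamma_{\op 2}$-bound from Theorem~\ref{thm:mainupperboundgamma}), and using $n^{1-1/p}\to n$ and $\min\{p,\sqrt{\ln n}\}=\sqrt{\ln n}$ for $p$ large, one obtains
\begin{equation*}
\absoluteerror(\mathsf A_{\op{fact}}(\decayingmatrix;n))\;\leq\;\frac{\sigma_{\epsilon,\delta}\sqrt{\ln n}}{2n}\sum_{k=0}^{2n-1}\bigl|m_f(\omega^k)\bigr|,
\end{equation*}
so each of the three bullets reduces to bounding the right-hand sum for the corresponding $f$.

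For (i), $f\equiv 1$ yields $m_f(x)=(1-x^n)/(1-x)$. Since $\omega^n=-1$, the factor $|1-\omega^{nk}|$ equals $2$ for odd $k$ and $0$ for even $k>0$, while $|1-\omega^k|=2\sin(k\pi/(2n))$; the $k=0$ term is $n$ by continuity. Reindexing the odd $k$ as $k=2j+1$ for $j=0,\dots,n-1$, the sum becomes $\tfrac12+\tfrac{1}{2n}\sum_{j=0}^{n-1}\csc((2j+1)\pi/(2n))$, whose Euler--Maclaurin asymptotic (essentially due to Mathias) equals $\countingupperbound$. For (ii), the sliding window, $m_f(x)$ is a truncated geometric series, so direct substitution into the sum yields exactly $\slidingupperbound$. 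For (iii), the striped weight function with block size $b\mid n$ has $m_f(x)=(1-x^n)/(1-x^b)$; writing $k=j+(2n/b)\ell$ with $j\in\{0,\dots,2n/b-1\}$ and $\ell\in\{0,\dots,b-1\}$, one checks using $b\mid n$ that both $\omega^{nk}$ and $\omega^{bk}$ depend only on $j$, so each value $|m_f(\omega^j)|$ is attained exactly $b$ times. Collecting these multiplicities reduces the sum to the case-(i) cosecant sum with $n$ replaced by $n/b$, which after renormalization yields $\multiepochbound$.

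The main obstacle is the Euler--Maclaurin asymptotic in case (i): extracting the leading $\frac{1}{\pi}\ln(2n/\pi)$ together with an $O(1/n^2)$ remainder from $\tfrac{1}{2n}\sum_{j=0}^{n-1}\csc((2j+1)\pi/(2n))$ requires careful treatment of the two near-singular endpoints, where naive Riemann-sum bounds lose the correct constants. Once this classical estimate is established, cases (ii) and (iii) follow from the reduction above essentially by direct substitution and by the explicit grouping argument just outlined.
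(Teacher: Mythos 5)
Your proposal is correct, and for parts (i) and (ii) it follows essentially the paper's own route: specialize the bound of Theorem~\ref{thm:mainupperbound} (equivalently, the $\gamma_{\op 2}$-bound of Theorem~\ref{thm:mainupperboundgamma} combined with \cref{eq:matrixmechanismerror}), note that for $f\equiv 1$ the sum collapses to $\tfrac12+\tfrac1{2n}\sum_{\ell=1}^{n}\csc\paren{(2\ell-1)\pi/(2n)}$, and invoke the cosecant-sum asymptotic; the estimate you defer as the ``main obstacle'' is exactly the paper's \Cref{lem:sumCosecantfunction}, proved there by bounding $|\csc|\le 1+|\cot|$ and integrating, so this is a known ingredient rather than a gap. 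Where you genuinely diverge is part (iii): the paper writes $M_\multi = M'_\counting\otimes \I_{b\times b}$ and uses Haagerup's multiplicativity $\gamma_{\op 2}(A\otimes B)=\gamma_{\op 2}(A)\gamma_{\op 2}(B)$ together with $\gamma_{\op 2}(\I)=1$ to reduce to the counting matrix of size $n/b$, whereas you evaluate the sum $\tfrac1{2n}\sum_{k}|m_f(\omega^k)|$ for $m_f(x)=(1-x^n)/(1-x^b)$ directly, observing that with $b\mid n$ each value $|m_f(\omega^j)|$, $j\in\{0,\dots,2n/b-1\}$, occurs with multiplicity $b$ (since $2n/b$ is even, both $\omega^{nk}$ and $\omega^{bk}$ depend only on $j$), which collapses the sum to the counting case with $n$ replaced by $n/b$. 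Both arguments give $\multiepochbound$; your version has the small advantage of bounding the error of the specific factorization produced by \Cref{alg:factorization} applied to $M_\multi$ (the quantity actually appearing in the corollary), while the Kronecker argument bounds $\gamma_{\op 2}(M_\multi)$ via a different (tensor) factorization and is shorter because it reuses the counting bound as a black box.
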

Our algorithm achieves better accuracy for continual counting than earlier explicit factorizations~\cite{dvijotham2024efficient,fichtenberger2023constant,henzinger2024unifying} owing to the improvement on $\gamma_{\op 2}(M_\counting)$. For the sliding window, the algorithms in~\cite{bolot2013private,henzinger2024unifying} used binary counting as a subroutine to handle the sliding window setting to get an additive error of ${4\Delta  \over \epsilon} \paren{1 + {\gamma+\ln(2W) \over \pi} }\sqrt{\ln\paren{2.5\over\delta}\ln(n)}$. As we directly factorize $M_\sliding$, we gain {more than} a multiplicative factor of $2$ over the algorithm in \cite{bolot2013private, henzinger2024unifying}. 
For $M_\multi$ our new factorization improves over prior work~\cite{fichtenberger2023constant} by the same amount as 
we improve the 
$\gamma_{(p)}$ norm. 

\smallskip
\paragraph{\textbf{Space consideration.}} 
Dvijotham et al.~\cite{dvijotham2024efficient} recently proposed an algorithm that outputs a factorization in $O(\log^2(n))$ space for $M_\counting$ while achieving {only slightly worse accuracy than us}. So a natural question is whether one can achieve a low-space algorithm for weighted continual sum problems, too, as considered in this paper. We next show that it is not the case if we consider pure additive error (ie no multiplicative error) with an error that scales inversely with $\epsilon$.

Suppose for contradiction that for some $\epsilon>0$ and $\delta\in (0,1)$ there exists an $(\epsilon,\delta)$-DP algorithm $\mathsf A$ for a general weighted function $\decayingfunctionset$, as considered in our paper, satisfying the following guarantees: $\mathsf A$ uses $o(n)$ space and incurs a non-trivial additive error that is inversely proportional in $\epsilon$ and sublinear in $n$. Then, by letting $\epsilon \to \infty$, we get a non-private exact algorithm that uses $o(n)$ space. This contradicts the lower bounds for any non-private algorithms for exponentially weighted matrices (Cohen and Strauss \cite[Lemma 3.1]{cohen2003maintaining}) and polynomial weighted matrices (Cohen and Strauss \cite[Lemma 3.2]{cohen2003maintaining}).

\section{Preliminaries}
\label{sec:preliminaries}
This section contains the necessary basic definitions to the level of exposition required to understand this paper.

\subsection{Field and Galois Theory.}
A {\em group} $G=(S,\odot)$ is a set $S$, together with a binary operation $\odot$, satisfying (a) closure and associativity along with (b) the existence of a special element $e \in S$ (known as identity) such that, for every $g \in S$, $g \odot e = g$, and (c) there exists an element $g^{-1}\in S$ satisfying $g \odot g^{-1} = e$. If the group also satisfies the commutativity property, then it is called {\em Abelian group}. 
The {\em generators} of a group $G$ are the set of elements $g \in G$ such that repeated application of $g$ on itself produces all the elements in the group. 
A {\em generating set} of a group is a subset of the underlying set of the group such that every element of the group can be expressed as a combination (under the group operation) of finitely many elements of the subset and their inverses. The \textit{order} of a group is the number of elements within that group. 

A {\em ring} is a set $R$ together with two binary operations, written as addition and multiplication, such that $R$ is an Abelian group under addition, $R$ is closed under multiplication, $R$ satisfies the associativity of multiplication and distributive law. A ring $R$ is said to be a {\em commutative ring} if $a\odot b = b\odot a$ for all $a,b \in R$, {where $\odot$ is the binary operation of usual multiplication}.
We say that an element $a\in R$ is a {\em unit} if there exists an element $b\in R$ (called {\em inverse}) such that $a\odot b = b\odot a = 1$. The set of all units of $R$ forms a group under multiplication and is called the {\em unit group} of $R$. 

A field is a {\em commutative ring} with identity such that the nonzero elements form a group under multiplication. If $\mathbb F$ and $\mathbb K$ with $\mathbb F\subseteq \mathbb K$ are fields, then $\mathbb K$ is called a {\em field extension} of $\mathbb F$. We will refer to the pair $\mathbb F \subseteq \mathbb K$ as the field extension $\mathbb K/\mathbb F$ and to $\mathbb F$ as the {\em base field}. We make $\mathbb K$ into an $\mathbb F$-vector space by defining scalar multiplication for $\alpha \in \mathbb F$ and $a \in \mathbb K$ as $\alpha a $.

Let $G$ be a finite multiplicative group of order $n$, identity $1$, with a listing of elements $\{g_1,g_2,\cdots, g_n\}$ and let $\{x_{g_1}, x_{g_2}, \cdots, x_{g_n}\}$ be a set of independent commuting variables indexed by the elements of $G$. Let $\mathbb F$ be a {\em field} and $(\mathbb F,+,\cdot)$ be a {\em ring} equipped with ordinary addition and multiplication operations. The set of all polynomials in the indeterminate $x$ with coefficients in $\mathbb F$  is denoted by $\mathbb F[x]$. A polynomial $a(x)$ in a {\em polynomial ring} $\mathbb F[x]$ defined over a field $\mathbb F$ is {\em irreducible} if it cannot be decomposed into two polynomials $b(x),c(x) \in \mathbb F[x]$ such that $b(x)c(x)=a(x)$.

\begin{lem}
\label{lem:powerofGenerator}
    Let $g$ be the generator of a multiplicative group $G$ of order $p$ and identity $1$. Let $k \in \mathbb Z$.
    \[
    \sum_{\ell=0}^{p-1} g^{k\ell} = \begin{cases}
        p &  \text{ if $k$ is a multiple of $p$ } \\
        0 & \text{ otherwise}
    \end{cases}.
    \]
\end{lem}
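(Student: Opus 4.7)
The plan is to split into the two cases asserted by the statement and use the standard geometric-series trick, which relies only on the fact that $g$ has multiplicative order exactly $p$ (being a generator of a group of order $p$).

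\textbf{Case 1: $p \mid k$.} Write $k = mp$ for some $m \in \mathbb{Z}$. Then for every $\ell$,
\[
g^{k\ell} = g^{mp\ell} = (g^{p})^{m\ell} = 1^{m\ell} = 1,
\]
because $g$ has order $p$ (so $g^p$ equals the identity $1$). Summing from $\ell=0$ to $p-1$ yields $p$.

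\textbf{Case 2: $p \nmid k$.} Since $g$ is a generator, its order is exactly $p$, so $g^k = 1$ if and only if $p \mid k$; under our assumption this fails, hence $g^k \neq 1$. Set $h := g^k$, so $h \neq 1$. Now compute
\[
(h-1)\sum_{\ell=0}^{p-1} h^{\ell} = h^{p} - 1,
\]
which is the usual telescoping identity. But $h^p = g^{kp} = (g^p)^k = 1$, so the right-hand side vanishes. Since we are working in $\mathbb{C}$ (where $h$ lives as a root of unity) and $h-1 \neq 0$, we may divide to conclude $\sum_{\ell=0}^{p-1} h^{\ell} = 0$, as claimed.

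I do not expect any serious obstacle: the only subtle point is justifying $g^k\neq 1$ in Case 2, which is exactly where the hypothesis that $g$ \emph{generates} the group (and not just lies in it) enters; without that, one only gets $g^p=1$, not the converse implication needed to rule out $h=1$. Everything else is a direct geometric-series computation performed in the ambient ring/field $\mathbb{C}$ in which the powers of $g$ are being summed.
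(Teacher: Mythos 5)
Your proof is correct and follows essentially the same route as the paper: the geometric-series (telescoping) identity for $\sum_{\ell} (g^k)^\ell$, using $g^{kp}=(g^p)^k=1$ and the fact that $g^k\neq 1$ when $p\nmid k$. If anything, your write-up is slightly cleaner, since you treat positive and negative $k$ uniformly and make explicit the point (that $g$ has order exactly $p$, so $1-g^k\neq 0$) which the paper's proof uses only implicitly when dividing by $1-g^k$.
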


\begin{proof}

    First, consider the case when $k<0$ and $k\ne 0 \mod p$. Since $g$ is a generator of the group $G$ with identity $1$, then $g^p=1$. Set $j=-k >0$. Then  
    \[
    \sum_{\ell=0}^{p-1} g^{k\ell} = {1 - g^{kp} \over 1 - g^k} = {g^{jp}- 1  \over g^{jp}(1 -g^k)} = {(g^{p})^j- 1  \over g^{jp}(1 -\omega^k)}  = 0.
    \]
    and $g$ is a generator of the group; therefore $g^p=1$. Similarly, for $k>0$  and $k\ne 0 \mod p$, we have 
    \[
    \sum_{\ell=0}^{p-1} g^{k\ell} = {1 - g^{kp} \over 1 - g^k} = 0.
    \]    

When $k=0 \mod p$, then $g^k=1$ and the result follows.
\end{proof}

\subsection{Matrices.}
The vector space of complex $n \times m$ matrices is denoted by $\complex^{n \times m}$. The set of real $n \times m$ matrices form a subspace of $\complex^{n \times m}$ and is denoted $\real^{n \times m}$. For a matrix $M$, its $(i,j)$-th entry is denoted by $M[i,j]$, the $i$-th row is denoted $M[i;]$, the $j$-th column is denoted $M[;j]$, and its complex conjugate is denoted by $M^*$.
We use the notation 
$\mathbb I_n$ to denote the $n \times n$ identity matrix,
and $0^{n \times m}$ to denote an $n \times m$ all zero matrix. 

A matrix $U$ is {\it unitary} if it satisfies $UU^* = \I_n$.
The set of unitary matrices is denoted $\unitary{\complex^n}$. The eigenvalues of a unitary matrix lie on the unit circle in a complex plane. In other words, every singular value of a unitary matrix is $1$. 

For two matrices $A,B \in \complex^{m\times n}$,  the {\em Kronecker product} is 
\[
{A}\otimes {B} = \begin{bmatrix}
  A[1,1] {B} & \cdots & A[1,n]{B} \\
             \vdots & \ddots &           \vdots \\
  A[m,1] {B} & \cdots & A[m,n] {B}
\end{bmatrix}.
\]

\subsubsection{Matrix norms.}
We begin with defining matrix norms induced by vector norms. For a matrix $M \in\complex^{n \times m}$, the norm $\norm{M}_{p\rightarrow q}$ is defined as
\[
\norm{M}_{p\rightarrow q} = \max_{x\in \complex^m}\left\{\frac{\norm{Mx}_q}{\norm{x}_p}\right\}.
\]

Of particular interests are the norms $\norm{M}_{1\rightarrow 2}$ and $\norm{M}_{2\to\infty} $, which are the maximum of the $2$-norm of the columns of $M$ and the maximum of the $2$-norm of the rows of $M$, respectively.

\subsubsection{Group Pattern Matrices} 
\label{sec:groupMatrices}
Our result relies on the {\em group-pattern matrix}.

\begin{defn}
[Group-pattern matrix~\cite{dedekiend1882}]
\label{def:groupPattern}
Let {$\mathbb F$ be a field} and $M \in \mathbb F^{n \times m}$ be an $n \times m$ matrix. Let $G$ be a multiplicative 
group of order $p = \max\{n,m\}$. Then, $M$ is a group-pattern matrix for $G$ and the list $(g_1, \cdots, g_p)$ of the elements in G if and only if there is a function $h:G \to \mathbb F$ such that its $(i,j)$-entry, $M[i,j] = h(g_i^{-1}g_j).$
\end{defn}

\begin{theorem}
    [Chalkley~\cite{chalkley1976}]
    \label{thm:chalkley}
    Let $G$ be a cyclic multiplicative group of order $p$. Let $g$ be its generator. Let $B$ and $C$ be $n \times n$ matrices whose entries are elements in a commutative ring $R$. Let $B$ be a group pattern matrix for $G$ with function $b:G \to R$ and let $C$ be a group pattern matrix for $G$ with function $c:G \to R$. Then $A=BC$ is a group pattern matrix for $G$ with function 
    \begin{align}
    a(x) = \sum_{k=1}^n b(g^k) c(g^{-k} x).
    \label{eq:multiplicationChalkley}        
    \end{align}
\end{theorem}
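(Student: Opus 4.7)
The plan is a direct computation. Since $G$ is cyclic of order $p$ with generator $g$, I will use the listing $g_i := g^i$ for $i=1,\ldots,p$ (so $g_p = 1$). By hypothesis $n = p$ (as $B$, $C$ are $n\times n$ group-pattern matrices for $G$, which has order $p$). The desired conclusion is that, for every $i,j$, the $(i,j)$-entry of $A=BC$ depends only on $g_i^{-1}g_j$, and equals $a(g_i^{-1}g_j)$ for the function $a$ in \eqref{eq:multiplicationChalkley}.

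First, I would expand the matrix product using \Cref{def:groupPattern}:
\[
A[i,j] \;=\; \sum_{k=1}^{n} B[i,k]\,C[k,j] \;=\; \sum_{k=1}^{n} b(g_i^{-1}g_k)\,c(g_k^{-1}g_j)
\;=\; \sum_{k=1}^{n} b(g^{k-i})\,c(g^{j-k}).
\]
Next, re-index by setting $\ell := k-i$ (taken mod $p$). As $k$ ranges over $\{1,\ldots,n\}$, the exponent $\ell$ also ranges over a complete set of residues mod $p$, so (using that $b$ and $c$ are functions on $G$, i.e.\ depend only on the exponent mod $p$) we may rewrite the sum as
\[
A[i,j] \;=\; \sum_{\ell=1}^{n} b(g^{\ell})\,c(g^{j-i-\ell}) \;=\; \sum_{\ell=1}^{n} b(g^{\ell})\,c\!\left(g^{-\ell}\cdot g^{j-i}\right).
\]
Since $g^{j-i} = g_i^{-1}g_j$, setting $x := g_i^{-1}g_j$ gives $A[i,j] = \sum_{\ell=1}^{n} b(g^\ell)\,c(g^{-\ell}x) = a(x)$, which is exactly \eqref{eq:multiplicationChalkley}. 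In particular, $A[i,j]$ depends only on $g_i^{-1}g_j$, so $A$ is a group-pattern matrix for $G$ (with respect to the same listing $(g_1,\ldots,g_p)$) with associated function $a$, as claimed.

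The only subtle point is the re-indexing step: one must verify that shifting $k\mapsto k-i\pmod p$ really permutes the index set $\{1,\ldots,n\}$ and that the summand only depends on the class mod $p$. Both facts follow immediately because $g$ has order $p=n$, so all expressions $b(g^{k-i})$ and $c(g^{j-k})$ are well-defined on the cyclic group; beyond this observation the argument is an elementary rearrangement of a finite sum, with no further obstacle.
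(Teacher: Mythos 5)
Your proof is correct. The paper itself gives no argument for this statement --- it is quoted directly from Chalkley's 1976 paper --- and your direct expansion of $A[i,j]=\sum_k b(g_i^{-1}g_k)\,c(g_k^{-1}g_j)$ followed by the change of variable $\ell=k-i \pmod p$ is exactly the standard verification one would expect there. Two small remarks: the identification $n=p$ you make is indeed forced by Definition~\ref{def:groupPattern} (the group order is $\max\{n,m\}=n$ for square matrices), and your re-indexing does not really need the listing $g_i=g^i$ or even cyclicity --- substituting $h=g_i^{-1}g_k$ shows that for any common listing of any finite group one gets $A[i,j]=\sum_{h\in G} b(h)\,c(h^{-1}x)$ with $x=g_i^{-1}g_j$, which specializes to \eqref{eq:multiplicationChalkley} when $G$ is cyclic and $h$ is enumerated as $g^k$; this matches the listing $(\omega,\omega^2,\dots,\omega^{2n}=1)$ actually used later in the paper.
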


In our setting we use $\mathbb{R}$ as field $\mathbb F$ and as commutative ring $R$ and the set of $2\streamlength$-th roots of unit as cyclic multiplicative group $G$ of order $2\streamlength$.

\subsubsection{Factorization Norms.}
One important class of matrix norm that has been extensively studied in functional analysis, operator algebra, and computer science are properties of {\em factorization norms}. 
Note first the monotonicity of each $\gamma_{\op{(p)}}$ norm:
\begin{fact}[Monotonicity of $\gamma_{(p)}(\cdot)$~\cite{haagerup1980decomposition}]
\label{claim:monotonicityGammap}
\label{claim:monotonicity}
    For any $\streamlength \in \mathbb N_+$ and $p\in [2,\infty)$, let $M$ be the matrix formed by the $\streamlength \times \streamlength$ principal submatrix of an $(\streamlength+1) \times (\streamlength+1)$ matrix $\widehat M$. Then $\gamma_{(p)}(M) \leq \gamma_{(p)}(\widehat M)$. 
\end{fact}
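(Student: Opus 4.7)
My plan is to prove monotonicity by the standard sub-factorization argument: take an arbitrary factorization of $\widehat{M}$ and restrict both factors to the index set defining the principal submatrix $M$, then show that the restriction does not increase either $\ptrace{\cdot}$ or $\colnorm{\cdot}$.

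Concretely, suppose $M$ is obtained from $\widehat{M}$ by deleting some row $k$ and column $k$; let $I \subseteq \{1,\dots,n+1\}$ with $|I|=n$ be the remaining indices and let $P_I$ denote the $n \times (n+1)$ row-selector matrix (so $P_I \widehat{M} P_I^\top = M$). Given any admissible factorization $\widehat{M} = \widehat{B}\widehat{C}$ with $\widehat{B} \in \mathbb{C}^{(n+1)\times d}$ and $\widehat{C} \in \mathbb{C}^{d\times(n+1)}$, define $B := P_I \widehat{B}$ and $C := \widehat{C} P_I^\top$. Then $BC = P_I \widehat{B}\widehat{C} P_I^\top = M$, so $(B,C)$ is an admissible factorization of $M$.

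Next I would bound each factor. For the generalized $p$-trace, writing $N = \widehat{B}\widehat{B}^\ast$, we have $(BB^\ast)[i,i] = N[\pi(i),\pi(i)]$ where $\pi$ enumerates $I$, so
\begin{equation*}
\ptrace{B}^{\,p} \;=\; \sum_{i \in I} N[i,i]^{p/2} \;\le\; \sum_{i=1}^{n+1} N[i,i]^{p/2} \;=\; \ptrace{\widehat{B}}^{\,p},
\end{equation*}
where the inequality uses that $N[i,i] = \|\widehat{B}[i;]\|_2^2 \ge 0$. Hence $\ptrace{B} \le \ptrace{\widehat{B}}$. For the column norm, the columns of $C$ are exactly a subset of the columns of $\widehat{C}$, so
\begin{equation*}
\colnorm{C} \;=\; \max_{j \in I} \bigl\|\widehat{C}[;j]\bigr\|_2 \;\le\; \max_{1 \le j \le n+1} \bigl\|\widehat{C}[;j]\bigr\|_2 \;=\; \colnorm{\widehat{C}}.
\end{equation*}
Multiplying the two inequalities yields $\ptrace{B}\,\colnorm{C} \le \ptrace{\widehat{B}}\,\colnorm{\widehat{C}}$.

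Finally, taking the infimum over all admissible factorizations of $\widehat{M}$ on the right and using that each such factorization produces an admissible factorization of $M$ by the construction above, the definition of $\gamma_{(p)}$ in \cref{eq:generalfactorizationnorm} gives $\gamma_{(p)}(M) \le \gamma_{(p)}(\widehat{M})$, as desired. No step looks like a serious obstacle; the only mildly subtle point is keeping track of the fact that the infimum defining $\gamma_{(p)}$ ranges over factorizations with an arbitrary inner dimension $d$, but since our construction preserves $d$, every factorization of $\widehat{M}$ is mapped to a valid factorization of $M$ with the same inner dimension, so the infimum comparison is legitimate.
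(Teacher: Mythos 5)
Your proof is correct. Note that the paper does not prove this fact at all — it is imported as a citation to Haagerup — so there is no internal argument to compare against; your sub-factorization argument (restricting both factors of an arbitrary factorization of $\widehat{M}$ via the row/column selector $P_I$, observing that $\ptrace{P_I\widehat{B}}$ sums a subset of the nonnegative diagonal entries of $\widehat{B}\widehat{B}^\ast$ and that $\colnorm{\widehat{C}P_I^\top}$ maximizes over a subset of columns, then passing to the infimum) is the standard proof and is complete, including the point about the inner dimension being preserved.
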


The first lower bound on $\gamma_{\op{2}}(M_\counting)$ was shown by Kwapien and Pelczynski~\cite{kwapien1970main}. Davidson~\cite{davidson1988} and 
Mathias~\cite{mathias1993hadamard} subsequently improved it with the following bound shown in Mathias~\cite{mathias1993hadamard}:
\begin{theorem}
    [Corollary 3.5 in Mathias~\cite{mathias1993hadamard}]
    \label{thm:mathias1993hadamard}
    Let $M_\counting$ be an $n\times n$ lower-triangular matrix with all ones. Then 
    \[
    \gamma_{\op{2}}(M_\counting) \geq \paren{\streamlength+1 \over 2\streamlength} \sum_{i=1}^\streamlength \left| \csc \paren{(2i-1)\pi \over 2\streamlength} \right|.
    \]
   
\end{theorem}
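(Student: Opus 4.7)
My plan is to reduce the lower-bound claim to a spectral computation by (i) invoking a duality characterisation of the factorisation norm and (ii) exploiting the Toeplitz structure of $M_\counting$ by embedding it as a principal block of a $2\streamlength \times 2\streamlength$ circulant matrix that is diagonalised by the discrete Fourier transform. The cosecant sum in the statement will emerge as the sum of the non-zero singular values of this circulant embedding, while the prefactor $(\streamlength+1)/(2\streamlength)$ accounts for the size ratio between the embedding and the original block.

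\emph{Step 1 (duality).} Invoke the classical identification of $\gamma_{\op 2}$ with the Schur (Hadamard) multiplier norm,
\[
\gamma_{\op 2}(M) \;=\; \sup_{A \neq 0}\ \frac{\norm{A \circ M}_{\op{op}}}{\norm{A}_{\op{op}}},
\]
or, equivalently (and more convenient here), the SDP dual $\gamma_{\op 2}(M) = \sup \norm{P^{1/2} M Q^{1/2}}_{*}$ taken over nonnegative diagonal $P,Q$ with $\operatorname{tr}(P)+\operatorname{tr}(Q)\le 1$, where $\norm{\cdot}_{*}$ is the nuclear norm. It therefore suffices to exhibit a single test matrix $A$, or equivalently a single pair $(P,Q)$, for which the ratio on the right matches the claimed bound.

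\emph{Step 2 (circulant embedding via DFT).} Embed $M_\counting$ as the leading $\streamlength \times \streamlength$ principal block of a $2\streamlength \times 2\streamlength$ circulant matrix $C$ whose first column is an appropriate sign-alternating pattern extending the triangular indicator of $M_\counting$. Since $C$ is circulant it is diagonalised by the DFT of order $2\streamlength$, and its eigenvalues are obtained by evaluating its symbol at the $2\streamlength$-th roots of unity $\omega^k=e^{\pi\iota k/\streamlength}$. A geometric-series computation, in the spirit of \Cref{lem:powerofGenerator}, shows that the even-indexed eigenvalues vanish while the odd-indexed eigenvalues satisfy $|\lambda_{2i-1}|=\left|\csc\paren{(2i-1)\pi/(2\streamlength)}\right|$ for $i=1,\ldots,\streamlength$; this is precisely the sum appearing in the statement.

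\emph{Step 3 (assembly) and main obstacle.} Take $(P,Q)$ to be the uniform projectors onto the $\streamlength$-dimensional block, pulled back through the Fourier basis and supported on the odd DFT modes. Rewriting $P^{1/2} M_\counting Q^{1/2}$ in the Fourier basis turns its nuclear norm into the sum of the $|\lambda_{2i-1}|$ values computed in Step~2, multiplied by the overlap between the $\streamlength$-dimensional projector and the $2\streamlength$-dimensional DFT basis, which yields the normalisation $(\streamlength+1)/(2\streamlength)$. The genuine difficulty is precisely in tuning this choice so that the ratio is \emph{tight}: one needs $\norm{A}_{\op{op}}$ (equivalently $\operatorname{tr}(P)+\operatorname{tr}(Q)$) to be small enough to produce the constant $(\streamlength+1)/(2\streamlength)$ while $A \circ M_\counting$ still captures the \emph{full} cosecant sum. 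Handling the boundary between the $\streamlength$-block and its complement inside the $2\streamlength$-circulant, presumably via a rank-one correction together with a convexity argument that converts an operator-norm bound into a sum of absolute values of eigenvalues, appears to be the main technical step.
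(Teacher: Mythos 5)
You should first note that the paper does not prove this statement at all: it is quoted verbatim from Mathias (Corollary 3.5), and the only lower-bound machinery the paper itself uses elsewhere is the nuclear-norm bound $\gamma_{\op 2}(M) \geq \tfrac{1}{\streamlength}\norm{M}_{1}$ together with Elliott's closed form for the singular values of $M_\counting$ (see \Cref{sec:optimality}), which produces cosecants with argument $\tfrac{(2i-1)\pi}{4\streamlength+2}$, not the quantities in this statement. So your proposal has to stand on its own, and as written it has a genuine gap concentrated exactly where you flag it, in Step 3. Steps 1 and 2 are fine in substance (modulo two slips: with $\operatorname{tr}(P)+\operatorname{tr}(Q)\le 1$ your dual formulation computes $\tfrac12\gamma_{\op 2}$, the correct normalization is $\norm{u}_2=\norm{v}_2=1$ for $P=\mathrm{Diag}(u)^2$, $Q=\mathrm{Diag}(v)^2$; and the zeroth eigenvalue of your $2\streamlength\times 2\streamlength$ circulant is $\streamlength$, not $0$). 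But the circulant embedding by itself only interacts with $\gamma_{\op 2}$ through monotonicity under taking submatrices, and that inequality runs the wrong way: it gives $\gamma_{\op 2}(M_\counting) \le \gamma_{\op 2}(C)=\tfrac{1}{2\streamlength}\sum_k|\lambda_k|$, which is precisely the paper's \emph{upper} bound (Theorem \ref{thm:mainupperboundgamma} specialized to $f\equiv 1$), not a lower bound.

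Concretely, in Step 3 the matrix whose nuclear norm you must lower bound is $\mathrm{Diag}(u)\,M_\counting\,\mathrm{Diag}(v)$ for unit vectors $u,v\in\real^{\streamlength}$; this is a triangular block, it is not circulant, and it is not diagonalised by the order-$2\streamlength$ DFT, so there is no legitimate way to read its singular values off the odd-mode eigenvalues $\csc\paren{(2i-1)\pi/2\streamlength}$ by an ``overlap of projectors'' computation. If you take the natural uniform choice $u=v=\streamlength^{-1/2}\mathbf 1$, you recover exactly the paper's route, $\gamma_{\op 2}(M_\counting)\ge \tfrac1\streamlength\norm{M_\counting}_1=\tfrac{1}{2\streamlength}\sum_i \csc\paren{(2i-1)\pi/(4\streamlength+2)}$, which is a different (and differently normalised) expression from the one claimed, so the constant in the statement cannot be produced this way. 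What Mathias actually does is compute the Hadamard (Schur) multiplier norm of a circulant \emph{exactly} (it equals the average of the absolute eigenvalues, by the group/Fourier structure) and then transfers this to the triangular matrix by decomposing the circulant into its lower- and upper-triangular parts and using symmetry/averaging of multiplier norms; your ``rank-one correction together with a convexity argument'' is a placeholder for precisely this transfer step, which is the entire content of the theorem and is missing from the proposal. (Separately, you may want to double-check the normalisation of the statement as printed: with the prefactor $\tfrac{\streamlength+1}{2\streamlength}$ the right-hand side grows like $\streamlength\ln \streamlength$, which would exceed the paper's own $O(\ln \streamlength)$ upper bound; Mathias's corollary carries an additional $1/\streamlength$ factor.)
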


This was improved in certain regimes by Matousek et al. \cite{matouvsek2020factorization}:
\begin{align}
    \label{eq:matousek2020}
    \gamma_{\op 2}(M_\counting) \geq {\ln((2\streamlength+1)/3) + 2 \over \pi}.
\end{align}

\subsection{Differential Privacy.}
The privacy and utility guarantees studied in this paper depend on the Gaussian distribution. Given a random variable $X$, we denote by $X \sim N(\mu, \sigma^2)$ the fact that $X$ has Gaussian distribution with mean $\mu$ and variance $\sigma^2$ with the probability density function $p_X(x) = \frac{1}{\sqrt{2 \pi \sigma}} e^{-\frac{(x-\mu)^2}{2\sigma^2}}.$ 
The multivariate Gaussian distribution is the multi-dimensional generalization of the Gaussian distribution. For a random variable $X$, we denote by $X \sim N(\mu, \Sigma)$ the fact that $X$ has a multivariate Gaussian distribution with mean $\mu \in \real^d$ and covariance matrix $\Sigma \in \real^{d \times d}$ which is defined as $\Sigma = \E[(X-\mu)(X-\mu)^\top]$. The  probability density function of a multivariate Gaussian has a closed form formula:
\[
p_X(x) = \frac{1}{\sqrt{(2 \pi)^\streamlength \mathsf{det}(\Sigma)}} e^{-(x-\mu)^\top \Sigma^{-1} (x-\mu) },
\] 
where $\mathsf{det}(\Sigma)$ denotes the determinant of $\Sigma$. The covariance matrix is a positive definite matrix. 
We use the following fact regarding the multivariate Gaussian distribution:
\begin{fact}
\label{fact:multivariate}
Let $X \sim N(\mu, \Sigma)$ be a $d$-dimensional multivariate Gaussian distribution. If $M \in \complex^{\streamlength \times d}$, then the multivariate random variable $Y = MX$ is distributed as though $Y \sim N(M\mu, M\Sigma M^*)$.
\end{fact}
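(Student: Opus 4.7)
The plan is to verify the claim in two stages: first compute the mean and covariance of $Y=MX$ by direct application of linearity, and then upgrade this to a distributional identification by invoking the characteristic function of a multivariate Gaussian. Since the fact is classical, the proof is essentially a bookkeeping exercise, but one should be careful with the complex-valued $M$, which is the only mild subtlety.

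First I would compute $\E[Y] = \E[MX] = M\E[X] = M\mu$ by linearity of expectation applied coordinatewise. Next, using the definition of the covariance matrix and the fact that $(MX)^* = X^* M^*$ componentwise, I would write
\[
\E\!\left[(Y - M\mu)(Y - M\mu)^*\right] = \E\!\left[M(X-\mu)(X-\mu)^* M^*\right] = M\,\E\!\left[(X-\mu)(X-\mu)^*\right] M^* = M\Sigma M^*,
\]
pulling the deterministic matrices $M$ and $M^*$ outside the expectation. This already pins down the two parameters of the candidate Gaussian.

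To promote this to a distributional statement, the cleanest route is via characteristic functions. Recall that a $d$-dimensional vector $X$ is $N(\mu,\Sigma)$ iff its characteristic function satisfies $\varphi_X(t) = \exp\!\bigl(\iota\, t^\top \mu - \tfrac12 t^\top \Sigma\, t\bigr)$ for all $t\in\real^d$. For any $s$ in the target space, I would compute
\[
\varphi_Y(s) = \E\!\left[\exp(\iota\, s^\top Y)\right] = \E\!\left[\exp(\iota\, (M^\top s)^\top X)\right] = \varphi_X(M^\top s) = \exp\!\Bigl(\iota\, s^\top M\mu - \tfrac12 s^\top (M\Sigma M^\top) s\Bigr),
\]
which is exactly the characteristic function of $N(M\mu, M\Sigma M^\top)$, and then invoke uniqueness of characteristic functions to conclude $Y\sim N(M\mu, M\Sigma M^*)$.

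The only nontrivial point is that the statement allows $M\in\complex^{n\times d}$, in which case $Y$ is a complex vector and the conjugate transpose $M^*$ (rather than the plain transpose) appears. I would handle this by decomposing $M = M_R + \iota M_I$ with $M_R,M_I\in\real^{n\times d}$ and applying the real case to the stacked real vector $(M_R X, M_I X)^\top \in \real^{2n}$, which is Gaussian with the obvious block mean and covariance; a short calculation then shows that the complex covariance $\E[(Y-M\mu)(Y-M\mu)^*]$ coincides with $M\Sigma M^*$ and encodes the joint distribution of the real/imaginary parts, justifying the notational convention $Y\sim N(M\mu, M\Sigma M^*)$ used throughout the paper. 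This bookkeeping is the main, and only, potential pitfall.
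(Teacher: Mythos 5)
The paper states this as a \emph{Fact} and offers no proof of its own, so there is nothing to compare against; your argument is the standard one and is correct. The mean and covariance computations by linearity are fine, and the characteristic-function step correctly identifies the law of $Y$ for real $M$, which is the only case the paper actually needs once it realifies the factors in Section 4.

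One small caution on the complex case, which you rightly flag as the only subtlety: your claim that the Hermitian covariance $M\Sigma M^*$ ``encodes the joint distribution of the real/imaginary parts'' is slightly overstated. For a complex Gaussian vector $Y$, the pair $(\operatorname{Re}Y,\operatorname{Im}Y)$ is jointly real Gaussian, and its full law is determined by the block covariance built from $M_R\Sigma M_R^\top$, $M_I\Sigma M_I^\top$, and $M_R\Sigma M_I^\top$; the Hermitian covariance $M\Sigma M^* = (M_R+\iota M_I)\Sigma(M_R^\top-\iota M_I^\top)$ recovers only $M_R\Sigma M_R^\top + M_I\Sigma M_I^\top$ together with the antisymmetric part of the cross term, and one additionally needs the pseudo-covariance $\E[(Y-M\mu)(Y-M\mu)^\top]=M\Sigma M^\top$ to pin down the distribution (complex Gaussians are not circularly symmetric in general, and certainly not here since $X$ is real). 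This does not invalidate your proof --- the stacked real vector $(M_RX, M_IX)$ is genuinely Gaussian by the real case, which is all that is needed --- but the notation $Y\sim N(M\mu, M\Sigma M^*)$ should be read as the paper's shorthand for that realified statement rather than as a complete parametrization of the complex law.
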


Our algorithm for continual counting uses the Gaussian mechanism. To define it, we need to first define the notion of $\ell_2$-sensitivity. For a function $f : \mathcal X^n \to \R^d$  its {\em $\ell_2$-sensitivity} is defined as 
\begin{align}
\Delta f := \max_{\text{neighboring }X,X' \in \calX^n} \norm{f(X) - f(X')}_2.
\label{eq:ell_2sensitivity}    
\end{align}

\begin{definition}
[Gaussian mechanism~\cite{dwork2006our}]
\label{def:gaussianmechanism}
Let $f : \mathcal X^n \to \R^d$ be a function with $\ell_2$-sensitivity $\Delta f$ and $\I_{d\times d}$ denote the $d\times d$ identity matrix. For a given $\epsilon,\delta \in (0,1)$  given $X \in \mathcal X^n$
the Gaussian mechanism $\mathsf A_{\mathsf{gauss}}$
returns $\mathsf A_{\mathsf{gauss}}(X) =  f(X) + e$, where $e \sim N(0,\sigma_{\epsilon,\delta}^2 (\Delta f)^2 \I_{d\times d})$ for  $\sigma_{\epsilon,\delta} =  \frac{2}{\epsilon} \sqrt{{\frac{4}{9} +  \ln \paren{\frac{1}{\delta}\sqrt{\frac{2}{\pi}}}}}$.
\end{definition}

\begin{theorem}
\label{thm:gaussian}
For a given $\epsilon,\delta \in (0,1)$ the Gaussian mechanism $\mathsf A_{\mathsf{gauss}}$ satisfies $(\epsilon,\delta)$-differential privacy.
\end{theorem}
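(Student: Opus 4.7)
The plan is to follow the classical pointwise privacy loss analysis for the Gaussian mechanism, but paying careful attention to the specific constant $\sigma_{\epsilon,\delta} = \tfrac{2}{\epsilon}\sqrt{\tfrac{4}{9}+\ln\bigl(\tfrac{1}{\delta}\sqrt{\tfrac{2}{\pi}}\bigr)}$, which differs slightly from the textbook Dwork--Roth constant $\sqrt{2\ln(1.25/\delta)}$.

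First I would reduce to the one-dimensional case. For neighboring datasets $X,X'$, set $v = f(X')-f(X)$, so $\|v\|_2 \le \Delta f$ by \cref{eq:ell_2sensitivity}. The noise distribution $N(0, \sigma_{\epsilon,\delta}^2(\Delta f)^2 \I_{d\times d})$ is spherically symmetric, so by \Cref{fact:multivariate} applied to a rotation that sends $v/\|v\|_2$ to the first standard basis vector, the marginals of $\mathsf A_{\op{gauss}}(X)$ and $\mathsf A_{\op{gauss}}(X')$ agree on every coordinate orthogonal to $v$. Thus it suffices to prove the privacy inequality for the two one-dimensional laws $N(0,\sigma^2)$ and $N(c,\sigma^2)$, where $\sigma = \sigma_{\epsilon,\delta}\,\Delta f$ and $c = \|v\|_2 \le \Delta f$; by monotonicity of the privacy loss in $c$, I may take $c=\Delta f$.

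Second, I would analyze the privacy loss random variable. For $Y \sim N(0,\sigma^2)$, let
\begin{equation*}
L(Y) \;=\; \log \frac{p_{N(0,\sigma^2)}(Y)}{p_{N(c,\sigma^2)}(Y)} \;=\; \frac{2cY - c^2}{2\sigma^2},
\end{equation*}
which under $Y$ is itself Gaussian with mean $-c^2/(2\sigma^2)$ and variance $c^2/\sigma^2$. It is standard (e.g., the ``privacy-loss-implies-DP'' lemma, Lemma 3.17 in Dwork--Roth) that if $\Pr_Y[L(Y) > \epsilon] \le \delta$ then the mechanism is $(\epsilon,\delta)$-DP, because for any measurable $S$,
\begin{equation*}
\Pr[\mathsf A_{\op{gauss}}(X)\in S] \le e^\epsilon \Pr[\mathsf A_{\op{gauss}}(X')\in S] + \Pr_Y[L(Y)>\epsilon].
\end{equation*}
So the whole task reduces to bounding $\Pr_Y[L(Y) > \epsilon]$.

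Third, I would carry out the tail computation. Standardizing, $\Pr_Y[L(Y) > \epsilon] = \Pr\bigl[Z > \tfrac{\epsilon \sigma}{c} + \tfrac{c}{2\sigma}\bigr]$ for $Z \sim N(0,1)$. With $c = \Delta f$ and $\sigma = \sigma_{\epsilon,\delta}\Delta f$, the threshold becomes $t := \tfrac{\epsilon \sigma_{\epsilon,\delta}}{1} + \tfrac{1}{2\sigma_{\epsilon,\delta}}$, independent of $\Delta f$. Plugging in the Mills-ratio bound $\Pr[Z>t] \le \tfrac{1}{t\sqrt{2\pi}} e^{-t^2/2}$ and substituting the specific value of $\sigma_{\epsilon,\delta}$ from \Cref{def:gaussianmechanism}, the exponent $-t^2/2$ expands to $-2\bigl(\tfrac{4}{9}+\ln(\tfrac{1}{\delta}\sqrt{2/\pi})\bigr) - \tfrac{1}{2} + o(1)$-type terms that cancel the $\tfrac{1}{\sqrt{2\pi}}$ prefactor via the $\ln(\tfrac{1}{\delta}\sqrt{2/\pi})$ contribution; the residual $\tfrac{4}{9}$ term is exactly what is needed to absorb the $\tfrac{1}{t}$ factor across the whole range $\epsilon \in (0,1)$. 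A routine algebraic check then yields $\Pr_Y[L(Y)>\epsilon]\le \delta$, completing the proof.

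The main obstacle is the last calibration step: verifying that the unusual constants $\tfrac{4}{9}$ and $\sqrt{2/\pi}$ inside $\sigma_{\epsilon,\delta}$ are exactly what is required to absorb both the Mills-ratio prefactor $1/(t\sqrt{2\pi})$ and the $c/(2\sigma)$ correction into the final tail bound $\delta$, uniformly over $\epsilon,\delta \in (0,1)$. Everything else is the standard reduction-plus-privacy-loss argument.
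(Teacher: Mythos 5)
The paper does not actually prove this theorem; it is imported from the literature (the statement is the standard Gaussian-mechanism guarantee with a sharpened constant, cited to Dwork et al.), so there is no in-paper argument to compare against. Your strategy — reduce to one dimension by rotational invariance, pass to the privacy-loss random variable, and bound its upper tail by $\delta$ — is the standard and essentially the only route, and with the details filled in it does establish the theorem.

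That said, two concrete issues keep the proposal from being a proof. First, there is a sign error in the privacy loss: for $Y\sim N(0,\sigma^2)$ one has
$\log\bigl(p_{N(0,\sigma^2)}(Y)/p_{N(c,\sigma^2)}(Y)\bigr) = (c^2-2cY)/(2\sigma^2)$,
which is Gaussian with mean $+c^2/(2\sigma^2)$ (the KL divergence), not $-c^2/(2\sigma^2)$. Consequently the standardized threshold is $t=\tfrac{\epsilon\sigma}{c}-\tfrac{c}{2\sigma}$, not $\tfrac{\epsilon\sigma}{c}+\tfrac{c}{2\sigma}$; your version overstates $t$ and hence understates the tail probability, which matters precisely because the constant $\tfrac{4}{9}$ is tuned with little slack. (Your monotonicity-in-$c$ claim is also only valid for the corrected threshold: $\tfrac{\epsilon\sigma}{c}+\tfrac{c}{2\sigma}$ is not monotone in $c$, whereas $\tfrac{\epsilon\sigma}{c}-\tfrac{c}{2\sigma}$ is decreasing, so $c=\Delta f$ is indeed worst-case.) Second, the step you yourself identify as the main obstacle — verifying that with $t = 2s - \tfrac{\epsilon}{4s}$, where $s=\sqrt{\tfrac49+\ln(\tfrac1\delta\sqrt{2/\pi})}$, the Mills-ratio bound $\tfrac{1}{t\sqrt{2\pi}}e^{-t^2/2}\le\delta$ holds uniformly for $\epsilon,\delta\in(0,1)$ — is asserted rather than carried out, and it is the only part of the argument that is not boilerplate. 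Expanding $t^2/2 = 2s^2 - \tfrac{\epsilon}{2} + \tfrac{\epsilon^2}{32 s^2}$ and using $e^{-2s^2} = e^{-8/9}\cdot\tfrac{\pi\delta^2}{2}$ reduces the claim to the elementary inequality $2t \ge e^{-8/9}\sqrt{\pi/2}\,\delta\,e^{\epsilon/2}$, which does hold on the stated parameter range, but this check (including the fact that $s^2>0$ for all $\delta<1$) needs to be written down for the proof to be complete.
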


In the continual observation model, we can consider different settings depending on how the stream is generated. If the entire stream is generated in advance, we call it the {\em non-adaptive} setting. A stronger setting, known as {\em adaptive} setting, is when the streamed input at $t$ depends not only on the input prior to the time $t$ but also on all the outputs given by the algorithm so far. The following result gives the reason  why we only care about non-adaptive input streams when using Gaussian mechanism:
\begin{theorem}
[Theorem 2.1 in Denisov, McMahan, Rush, Smith, and Thakurta~\cite{mcmahan2022private}]
\label{thm:denisovadaptive}
Let $A \in \real^{n\times n}$ be a lower-triangular full-rank query matrix, and let $M = LR$ be any factorization with the following property: for any two neighboring streams of vectors $x, x' \in \real^{n}$ , we have $\norm{R(x-x')} \leq \psi$. 
Let $z \sim N(0,\zeta^2 C_{\epsilon,\delta}^2 \I_{n \times n})$ with $\psi$
large enough so that $\dpmechanism_{\mathsf{matrix}}(x)=Mx+Lz=L(Rx+z)$ satisfies $(\epsilon,\delta)$-DP in the nonadaptive continual release model. Then, $\dpmechanism_{\mathsf{matrix}}$ satisfies the same DP guarantee (with the same parameters) even when the rows of the input sequence are chosen adaptively.
\end{theorem}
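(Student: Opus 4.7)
The plan is to transfer the privacy guarantee from the nonadaptive to the adaptive continual release setting via a coupling argument that exploits the fact that the noise vector $z \sim N(0, \zeta^2 C_{\epsilon,\delta}^2 \I_{n\times n})$ can be sampled once, in advance, independently of any input. Since $A$ is lower-triangular, the $t$-th coordinate of $Mx$ depends only on $x_1,\ldots,x_t$; so the mechanism admits a causal, stream-wise implementation even though $L$ need not itself be lower-triangular: we pre-sample $z$, then release $(Mx)_t + (Lz)_t$ at time $t$ using only $x_1,\dots,x_t$ together with the pre-computed vector $Lz$.

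First, I would formalize the adaptive game. An adaptive adversary $\mathcal{B}$ produces at each round $t$ an increment $x_t$, and on a parallel thread an alternative $x_t'$, each as a function of the previously released outputs $o_1,\ldots,o_{t-1}$, subject to the neighboring constraint that the two resulting streams $x,x' \in \real^{n}$ differ in at most one coordinate. Write $\tau$ and $\tau'$ for the adversary's two transcripts. The task reduces to showing $\Pr[\tau \in S] \leq e^{\epsilon} \Pr[\tau' \in S] + \delta$ for every measurable $S$.

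Next, I would couple the two executions on a common noise vector $z$. For any fixed $z$, the adversary's (deterministic, once its internal coins are also fixed) strategy induces both a stream $x(z)$ and its transcript $\tau(z) = M x(z) + L z$, and likewise $x'(z), \tau'(z)$. Because the adversary respects the neighboring constraint pointwise in $z$, the hypothesis $\norm{R(x(z)-x'(z))}_2 \leq \psi$ holds almost surely. The transcript $\tau(z)$ is then a measurable post-processing of $Rx(z) + z$: using $A = M$'s lower-triangular structure and invertibility, one recovers $x_1$ from $o_1$, then $x_2$ from $o_2$, and so on, so that $(x(z), z, \tau(z))$ is a deterministic function of $Rx(z)+z$. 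Hence, if the nonadaptive release $Rx + z$ is $(\epsilon,\delta)$-DP under the Gaussian mechanism at sensitivity $\psi$ (which is exactly the nonadaptive hypothesis, since $\dpmechanism_{\mathsf{matrix}}(x) = L(Rx+z)$ and $L$ is a fixed, input-independent post-processing), then $\tau$ inherits the same $(\epsilon,\delta)$-DP bound by the post-processing property of differential privacy.

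The main obstacle will be the apparent circularity introduced by adaptivity: a priori, $x(z)$ depends on $z$ through the past outputs, so $R(x(z)-x'(z))$ is itself a random vector. Resolving this requires showing that the sensitivity bound $\norm{R(x(z)-x'(z))}_2 \leq \psi$ holds for \emph{every} fixed $z$ rather than merely on average, which in turn relies on the neighboring constraint being enforced pointwise by the adversary's strategy, and on the reconstruction map $(Rx+z) \mapsto (x,\tau)$ being measurable for each fixed strategy. Both are immediate from the full-rank lower-triangular structure of $M$, but the bookkeeping — specifying the adversary as a measurable tree and verifying the coupling identity — is where the care is concentrated.
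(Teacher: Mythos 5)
The paper you are comparing against does not actually prove this statement: it is quoted verbatim as Theorem 2.1 of Denisov, McMahan, Rush, Smith, and Thakurta and used as a black box, so the relevant comparison is with the original proof in that paper. Measured against it, your argument has a genuine gap at its core. The step ``the adversary respects the neighboring constraint pointwise in $z$, so $\norm{R(x(z)-x'(z))}_2 \le \psi$ holds almost surely'' is not true under the common-noise coupling you set up. In the adaptive continual release game the two executions agree only up to the first round at which the submitted pair differs; from that round on the released outputs differ between the two worlds, the adversary's later choices are functions of these differing views, and so the two realized streams (coupled on the same $z$) can disagree in \emph{every} subsequent coordinate, even though each submitted pair is constrained to be equal there. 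Consequently $R(x(z)-x'(z))$ has no a priori norm bound, and the quantity you want to compare is not ``the Gaussian mechanism evaluated on two fixed neighboring inputs.'' This also exposes the second, related flaw: invoking the nonadaptive $(\epsilon,\delta)$ guarantee for $Rx+z$ and then post-processing by $L$ is circular, because the nonadaptive guarantee is a statement about two distributions induced by inputs that are independent of the noise, whereas in your coupling $x(z)$ is a measurable function of $z$. Presampling $z$ and measurability of the reconstruction map do not repair this.

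The actual proof in Denisov et al.\ is genuinely harder and uses the hypotheses you never touch. It exploits the lower-triangular, full-rank structure of the query matrix to argue that the adversary's view before the differing round is identically distributed in both worlds, conditions on that prefix, and then uses specific properties of the independent Gaussian noise (the conditional law of the future noise given past outputs remains Gaussian with controlled shift) to run an inductive, round-by-round privacy-loss argument that reduces the adaptive interaction to the nonadaptive sensitivity bound. The fact that your sketch nowhere uses Gaussianity beyond ``the noise can be sampled in advance'' is itself a warning sign: adaptive and nonadaptive continual release are \emph{not} equivalent for general mechanisms (there are known separations), so any correct proof must lean on the distributional structure of the noise, not just on post-processing and causality. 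To fix your write-up you would need to replace the pointwise-coupling claim with the conditional/sequential argument: fix the transcript up to the challenge round, show the conditional output distributions in the two worlds differ only through a single bounded shift of a Gaussian, and handle the post-challenge rounds by a simulation argument rather than by asserting the streams stay neighboring.
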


\section{Our Techniques}
\label{sec:techniques}
Prior works that gave explicit factorization~\cite{dvijotham2024efficient,fichtenberger2023constant,henzinger2023almost,henzinger2024unifying} use the fact that the matrix in the (weighted) continual sum problem is a lower-triangular {\em Toeplitz matrix}, i.e., a matrix $\decayingmatrix$  as stated in \cref{def:Mf}. For the decaying sum problem with weight function $f:\mathbb N \to \real_+$,  one can represent the corresponding Toeplitz matrix by the unique polynomial defined in \cref{eq:decaypolynomialHenzingerUU24}. This polynomial is a degree-$(n-1)$ polynomial. The idea in these works is to extend this polynomial to an infinite degree polynomial, i.e., a \textit{generating function}:
\begin{align}
\label{eq:generatingFunction}
m_f^{(\infty)}(x) := f(0) + f(1) x + f(2)x^2 + \cdots .
\end{align}

This function corresponds to a Toeplitz operator, $\mathsf M_f$, and the decaying sum matrix, $\decayingmatrix=\mathsf M_f[:n,:n]$, is the $\streamlength \times \streamlength$ principal submatrix of $\mathsf M_f$. They then use the fact that, for two Toeplitz operators $\mathsf P$ and $\mathsf Q$ with associated polynomials $p(x)$ and $q(x)$, respectively, their product $\mathsf P\mathsf Q$ has the associate polynomial $p(x)q(x)$~\cite{bottcher2000toeplitz}. Thus, they compute the factorization by using the first $n$ coefficients $r_{f,0}, r_{f,1},\cdots, r_{f,n-1}$ of the  polynomial $r_f^{(\infty)}(x):=(m_f^{(\infty)}(x))^{1/2} = r_{f,0} +r_{f,1} x + \cdots $.

For the decaying sum problem, Henzinger, Upadhyay, and Upadhyay~\cite{henzinger2024unifying} used Fa\`{a} di Bruno's formula~\cite{roman1980formula} to compute the coefficients of the polynomial $r_f^{(\infty)}(x)$. As a consequence, their coefficients depend on evaluating (the difficult to estimate) Bell's polynomial, and this is unavoidable because of the combinatorial structure in Faa di Bruno's formula~\cite{roman1980formula}. The second issue with this approach is that, in the end, they truncate the generating function of the square root to get the desired finite-dimensional matrix. This results in lower-triangular Toeplitz factors $L$ and $R$ of $\decayingmatrix$, which we show are not generally optimal even though they are optimal for factors restricted to lower-triangular Toeplitz matrices~\cite{dvijotham2024efficient}. 

\subsection{Our Techniques and Construction.}
We start with a simple question: \textit{do the factors need to be lower-triangular Toeplitz matrices?}  
Let's consider the simplest example of continual counting, where $m_f(x)$ is the {\em Cyclotomic polynomial} and $m_f^{(\infty)}(x)=(1-x)^{-1}$. Therefore, $r_f^{(\infty)}(x) = (1-x)^{-1/2}$, i.e., the entries of the Toeplitz factors are the generalized binomial coefficients. 
Dvijotham, McMahan, Pillutla, Steinke, and Thakurta~\cite{dvijotham2024efficient} showed that we can approximate the degree-$(n-1)$ truncation of $(1-x)^{-1/2}$ by an $O(\ln^2(n))$-degree polynomial over $\set{e^{\iota \theta}: \theta \in [0,2\pi]}$. They also showed that if we restrict the minimization problem in \cref{eq:generalfactorizationnorm} to lower-triangular Toeplitz matrices, then the square root factorization in~\cite{fichtenberger2023constant} is an optimal factorization in terms of the $\gamma_2$-norm, i.e., it achieves the smallest upper bound of the $\gamma_2$ value over all such matrices. That is, we need a different approach to find factors that achieve a better upper bound on the $\gamma_2$ value. 
Our first observation is that for the continual (weighted) sum problem, we do not require the matrices to be either Toeplitz or square. 
In fact, the factors in the binary mechanism are neither Toeplitz nor square~\cite{henzinger2023almost}.

Using Denisov et al.~\cite{mcmahan2022private}, we have that, for any matrix $M$, there is an \emph{optimal} factorization  $LR=M$ with lower-triangular matrices $L$ and $R$ achieving $\gamma_{\op 2}(M)=\rownorm{L}\colnorm{R}$. 
Therefore, our approach is to find a lower-triangular matrix $L$ that results in the optimal factorization and then set $R$ to be $L^\dagger \decayingmatrix$, where $L^\dagger$ is the {\em Moore-Penrose pseudoinverse}. The proof in Denisov et al.~\cite{mcmahan2022private} relies on the fact that we know an optimal factorization; that is, it only proves the existence of a lower triangular factor $L$, but not a bound on $\gamma_{\op 2}$.  Theoretically, one can compute these factorizations by solving a convex program, but solving such a program with even the state-of-the-art optimizer is impractical in settings like private learning or statistics where continual counting is used as $n$ can range from $1$ million to $10$ billion. We, therefore, aim to find an algebraic approach. 

\subsubsection{Mapping to polynomials.} 
\label{sec:mappingtoPolynomial}
As in previous works~\cite{dvijotham2024efficient, fichtenberger2023constant,henzinger2024unifying}, we work with polynomials. Continuing on using the continual counting matrix as an example, recall that previous explicit factorizations~\cite{fichtenberger2023constant} considered the generating function corresponding to the continual counting matrix $M_\counting$ because the {\em cyclotomic polynomial} (or the polynomial $m_f(x)$ for continual counting) is {\em irreducible} when $\streamlength$ is a finite prime number. This issue is also the fundamental reason why prior work cannot factorize the matrix corresponding to the sliding window model~\cite{henzinger2024unifying}. Our key observation is that we do not need to find the square root of the generating function but it suffices to find another polynomial such that its coefficients for the terms up to $x^{n-1}$ match that of $m_f(x)$. This brings us to our conceptual departure from prior work.
Unlike the previous works, we lift the problem to a $(2\streamlength-1)$-degree polynomial instead of to the generating function. Therefore, instead of relying on the Fa\`{a} di Bruno formula, we rely on concepts in abstract algebra developed to study the irreducibility of polynomials, ensuring that we have a nice algebraic structure instead of the more combinatorial structure of Bell's polynomial. 

Using a $(2\streamlength-1)$-degree polynomial, say $a_f(x)  \in \real[x]$ (instead of infinite degree polynomials as in previous works~\cite{dvijotham2024efficient, fichtenberger2023constant,henzinger2024unifying}), for a given weighted sum problem defined by a function $f: \mathbb N \to \real$ introduces several challenges, and tackling them leads to our approach: 

\begin{enumerate}
    \item The square root of  $a_f(x)$ is no longer in $\real[x]$ because it is an odd degree polynomial.     
    \item We have to either ensure that the polynomial $a_f(x)$ is not irreducible or find another technique. For example, if for $M_\counting$, i.e., for the constant function $f(n)=1$, we restrict the generating function $(1-x)^{-1}$ to the degree-$(2\streamlength-1)$ polynomial, we still have a cyclotomic polynomial that is irreducible when $2\streamlength-1$ is a prime number~\cite{weintraub2013several}.\label{item:irreducibility}    
    \item Finally, even if our constructed polynomial $a_f(x)$ is reducible and we can find two factors, we still have to ensure that the resulting factor polynomials do not have any coefficient with a large absolute value. If not, this would lead to a weak bound on the factorization norm because the bound is the sum of the square of coefficients of these factors. This is the reason why \cite{henzinger2024unifying} only considered monotonic weight functions. 
    \label{eq:largeCoefficient}
\end{enumerate}

One approach can be to define a polynomial $r_f(x) \in \complex[x]$  such that the restriction of $(r_f(x))^2 $ to a degree $(2\streamlength-1)$-polynomial is exactly $a_f(x)$ and all the coefficients of $r_f$ have bounded absolute value. We can then take the first $n$ coefficients of $r_f(x)$. The question then is how to define such a polynomial without ending up in a situation where we have lower-triangular Toeplitz factors and, consequently, no improvement over prior bounds.  

Our approach revisits this problem from a group-theoretic perspective, more specifically, the {\em group matrix} introduced by Dedekind~\cite{dedekiend1882} (also see \Cref{def:groupPattern}). In particular, we define a {cyclic multiplicative group} $G$ and construct a polynomial $a_f:G \to \real$ such that $\decayingmatrix$ is a submatrix of a group-pattern matrix $\widehat \decayingmatrix \in \complex^{2\streamlength \times 2\streamlength}$ for $G$. We then apply Theorem~\ref{thm:chalkley} to $\widehat \decayingmatrix$ to find a factorization of $\widehat \decayingmatrix$, which in turn leads to a novel factorization of $\decayingmatrix$. 
Note that, in  Theorem~\ref{thm:chalkley}, we do not require $a(x)$ to be reducible because only $c(\cdot)$ is a function of $x$ and $b(g^k)$ does not depend on $x$. As a result, we do not have to worry about irreducibility of $a_f(x)$, resolving the issue mentioned in \cref{item:irreducibility} on page~\pageref{item:irreducibility}.

We use a group that simplifies the calculations of the $\gamma_{\op 2}$-norm for specific matrices like $M_\counting$. Towards this goal, note that the set of roots of unity forms a {\em cyclic multiplicative group}. In particular, let $S:=\set{\omega_1,\omega_1, \cdots, \omega_{2\streamlength}}$ denote the set of $2\streamlength$-th roots of unity, i.e., $\omega_i = \omega^i$ for the generator $\omega = e^{\iota \pi/\streamlength}$. Then $G=(S,\times)$ is a cyclic group of order $2\streamlength$ with \textit{generator} $\omega$
under the binary operation of complex multiplication, $\times$. We construct the following polynomial: 
\begin{align}
\label{eq:polynomialrepresentationFinal}
a_f(x) = {1 \over 2\streamlength} \sum_{\ell=0}^{2\streamlength-1}  \paren{\sum_{k=0}^{n-1}f(k)\omega^{k\ell}} x^\ell = {1 \over 2\streamlength}\sum_{\ell=0}^{2\streamlength-1} m_f(\omega^\ell) x^\ell .    
\end{align}

We chose $a_f$ such that $a_f(\omega^{-d})$ for integer $-n \le d \le 2\streamlength-1$ fulfills the following condition, which we will show to be very useful:
\begin{align}
a_f(\omega^{-d}) = \begin{cases}
    0 & -n \leq d \leq -1  \\
    f(d) & 0 \leq d \leq n-1 \\
    0 & \streamlength \leq d \leq 2\streamlength-1
\end{cases}. 
\label{eq:groupPatterndecayingmatrix}
\end{align}

Now we denote  $\widehat \decayingmatrix$ to be the group-pattern matrix for $G$,  $(g_1, \dots, g_{2\streamlength})$, and $h$ 
in Definition \ref{def:groupPattern}, where $G=(S,\times)$ is the above cyclic group of roots of unity, $g_i = \omega_i$, and $h= a_f$.
Thus $\widehat \decayingmatrix [i,j] = a_f(\omega_i^{-1}\omega_j)$ for $1 \le i,j \le 2\streamlength$.
Now recall that $M_f[i,j]=f(i-j)$ when $i\geq j$, and $0$, otherwise. \textit{The crucial observation is now that  $M_f[i,j]=a_f(\omega^j \omega^{-i})$ for $1\leq i,j \leq \streamlength$, i.e.,  $M_f$ is an $\streamlength \times \streamlength$ principal submatrix of $\widehat \decayingmatrix$. } As a consequence, a factorization of $\widehat \decayingmatrix$ leads to a factorization of $\decayingmatrix$. 

Our approach is very general: The group $G$ in our approach needs to be cyclic, multiplicative, and of order $2\streamlength$, but it is otherwise unrestricted. Thus, different choices for $G$ can lead to different group-pattern matrices, which in turn can lead to different factorizations. In particular, depending on the application, one might choose different groups that allow fast computation, low-space representation, etc. For example, performing computation on the extension of the binary field, $\mathbb F_{2^m}$ is more efficient than on a prime field, $\mathbb F_q$ or its extension.

One can contrast this with the ``restrictive" operator action used in earlier work~\cite{henzinger2024unifying}: (a) the polynomial is the generating function, $m_f^{(\infty)}(x)$; (b) the linear operator is $D^{(i-j)}$ for $i\geq j$, where $D$ is the differential operator; (c) the underlying group is the cyclic additive group with identity $0$; and (d) the polynomial evaluation is done at the identity element. That is, their evaluation point and operator were restricted once the polynomial $m_f^{(\infty)}(x)$ was fixed. Removing these restrictions is the intuitive reason for our improvements and simplification.

\subsubsection{Finding a factorization and bound on \texorpdfstring{$\gamma_{\op 2}(\decayingmatrix)$}{Lg}.}
Now that we have defined an appropriate polynomial, we need to compute its factors.
Unlike previous factorizations~\cite{dvijotham2024efficient, fichtenberger2023constant, henzinger2023almost, henzinger2024unifying}, we do not restrict ourselves to computing a factor such that $L$ and $R$ are real-valued lower-triangular matrices in this subsection; instead we  first compute factors and then show in Section \ref{sec:removingComplex} how to convert them into real-valued and lower-triangular factors using standard tricks.
Our factorization follows from basic concepts in group-matrix theory: if we consider the coefficients of polynomial $a_f(x)$ to form the entries of the matrices (shown in \cref{eq:groupPatterndecayingmatrix}), then by Theorem \ref{thm:chalkley}, it suffices to find polynomials $b_f(x), c_f(x)$ such that
\begin{align}
\label{eq:complexFactorization}
\sum_{k=0}^{2\streamlength-1} b_f(\omega^k) c_f(\omega^{-k}x) = a_f(x). 
\end{align}

Depending on the use case, such as accuracy or efficiency with respect to space and time, we can now find different factorizations, one of which can be that $b_f(x)=c_f(x)$. In \Cref{sec:proofComplexFactorization}, we show that \cref{eq:complexFactorization} is satisfied by using 
\begin{align}
b_f(x) = c_f(x) = {1 \over 2\streamlength} \sum_{\ell=0}^{2\streamlength-1} x^\ell \paren{\sum_{k=0}^{n-1}f(k)\omega^{k\ell}}^{1/2} = {1 \over 2\streamlength}\sum_{\ell=0}^{2\streamlength-1} \paren{m_f(\omega^\ell)}^{1/2}x^\ell.
\label{eq:polynomialfactorFinal}    
\end{align}
 
As before, let $G$ be the cyclic multiplicative group of roots of unity of order $2\streamlength$ under complex multiplication and let $\omega=e^{i\pi/n}$.
We now define matrices $\widetilde L \in \complex^{\streamlength \times 2\streamlength}$ and $\widetilde R \in \complex^{2\streamlength \times \streamlength}$ as follows:  
\begin{align}
    \label{eq:widetildeL}
    \text{for all} \quad 1 \le i \leq  \streamlength, 1 \leq j \le 2\streamlength, \quad  \widetilde L[i,j] :=b_f(\omega^{j-i})
\end{align} to be the group pattern matrix over $G$ 
and the list $(\omega^{2\streamlength}=1,\omega, \omega^2, \dots, \omega^{2\streamlength-1})$, 
and set $\widetilde R = \widetilde L^*$, where $L^*$ is the complex conjugate of $L$. 

The fact that $\widetilde R$ is a group pattern matrix with function $c_f(x)$ follows from the fact that $\omega_k^*=\omega^{-k}=\omega^{2\streamlength-k}$ for all $1\leq k \leq 2\streamlength$. In particular, for all $1 \leq i \leq 2 \streamlength$ and $1\leq j \leq \streamlength$, 
\begin{align*}
    \widetilde R[i,j] &= \widetilde L^*[i,j] = (\widetilde L[j,i])^* = b_f(\omega^{i-j})^* = b_f(\omega^{2\streamlength-(i-j)}) = b_f(\omega^{j-i}) = c_f(\omega^{j-i}),
\end{align*}
where we used the fact that $f$ is real-valued.

Equation~\ref{eq:polynomialfactorFinal} allows us to bound $\ptrace{\widetilde L}\colnorm{\widetilde R}$. 
In particular, let $N=\widetilde L\widetilde L^* \in \complex^{\streamlength \times \streamlength}$.
Using the fact that $\widetilde R = \widetilde L^*$, we show in \Cref{sec:proofMainUpperBoundGamma} (see  \Cref{claim:traceBoundFirststep}) the following:
\begin{align*}
    \colnorm{\widetilde R}^2  \leq {1 \over 2\streamlength} \sum_{\ell=0}^{2\streamlength-1} \left|m_f(\omega^\ell)\right| \quad \text{and} \quad N[i,i] \leq {1 \over 2\streamlength} \sum_{\ell=0}^{2\streamlength-1} \left|m_f(\omega^\ell)\right| \quad \text{for}~1 \leq i \leq 2\streamlength.
\end{align*}

This immediately implies that 
\begin{align}
    \ptrace{\widetilde L} = \paren{\sum_{i=1}^{\streamlength} N[i,i]^{p/2} }^{1/p} \leq   \paren{n \left( {1 \over 2\streamlength} \sum_{\ell=0}^{2\streamlength-1}  \left|m_f(\omega^\ell)\right| \right) ^{p/2} }^{1/p} \le n^{1/p} \paren{{1 \over 2\streamlength} \sum_{\ell=0}^{2\streamlength-1} \left|m_f(\omega^\ell)\right|}^{1/2}.
\end{align}

From  Theorem \ref{thm:chalkley} it follows that $\widetilde L \widetilde R^* \in  \real^{\streamlength \times \streamlength}$ is a group pattern matrix for $G$ with function $a_f(x)$. But by our observation above $\decayingmatrix$ has exactly the same entries as this group pattern matrix, i.e., it follows that 
$\decayingmatrix=\widetilde L \widetilde R$. Theorem \ref{thm:mainupperboundgamma} now follows:
\[
\gamma_{\op{(p)}}(M_f)  \leq \colnorm{\widetilde R} \ptrace{\widetilde L} \leq  { n^{1/p} \over 2\streamlength} \sum_{\ell=0}^{2\streamlength-1} \left|m_f(\omega^\ell)\right|.
\]

Note that the only missing pieces of this analysis are 
(1) a proof of~\cref{eq:groupPatterndecayingmatrix}, (2) a proof that \cref{eq:complexFactorization} is satisfied with our choice of $b_f$ and $c_f$, (3) the upper bounds on $\colnorm{\widetilde R}$ and $N(i,i)$.
We will first prove these missing pieces in \Cref{sec:proofgroupPatterndecayingmatrix} and \Cref{sec:proofComplexFactorization} before discussing how to turn our new factorization into a new differentially private algorithm for continual release in \Cref{sec:removingComplex}.

\subsection{Proof of \texorpdfstring{\cref{eq:groupPatterndecayingmatrix}}{Lg}.} 
\label{sec:proofgroupPatterndecayingmatrix}
When $x=\omega^{-d}$ for $0 \leq d \leq n-1$,  \Cref{lem:powerofGenerator} gives 
\begin{align*}
a_f(\omega^{-d}) 
    &= {1 \over 2\streamlength} \sum_{\ell=0}^{2\streamlength-1}\paren{1 + f(1) \omega^\ell + f(2) \omega^{2\ell} + \cdots + f(n-1)\omega^{(n-1)\ell}} \omega^{-d\ell} \\ 
    &= {1 \over 2\streamlength} \paren{ \sum_{\ell=0}^{2\streamlength-1} f(d) +   \underbrace{\sum_{k=0}^{d-1} f(k) \sum_{\ell=0}^{2\streamlength-1} \omega^{(k-d)\ell} + \sum_{k=d+1}^{\streamlength-1} f(k) \sum_{\ell=0}^{2\streamlength-1} \omega^{(k-d)\ell}}_{=0 \text{ by setting $g=\omega,p=2\streamlength$ in \Cref{lem:powerofGenerator}}} } 
    = f(d).
\end{align*}

For the other two cases, recall that $1,\omega, \cdots, \omega^{2\streamlength-1}$ are roots of polynomial $z^{2\streamlength}-1$. Therefore, for all  integer $k\geq 1$ that is not a multiple of $2\streamlength$, 
$$ \sum_{\ell=0}^{2\streamlength-1} \omega^{k \ell} =
\sum_{\ell=0}^{2\streamlength-1} \omega_k^\ell =\frac{\omega_k^{2\streamlength}-1}{\omega_k-1} = 0,$$
where $\omega_k = \omega^k$ is a roots of the polynomial $z^{2\streamlength}-1$.  
This implies that, for $x=\omega^{-d}$ when $\streamlength \leq d \leq 2\streamlength-1$ and $-\streamlength \leq d \leq -1$, $a_f(\omega^{-d}) = 0$. In more detail, when $-n \leq d \leq -1$, note that $k-d \ge 1$ for all integer $k\ge 0$. Then we have 
\begin{align*}
a_f(\omega^{-d}) 
    &= {1 \over 2\streamlength} \sum_{\ell=0}^{2\streamlength-1} \omega^{d \ell} \sum_{k=0}^{n-1}f(k) \omega^{k \ell}= {1 \over 2\streamlength} \sum_{k=0}^{n-1}f(k) \sum_{\ell=0}^{2\streamlength-1} \omega^{(k-d)\ell}=0.
\end{align*}

When $\streamlength \leq d \leq 2\streamlength-1$, then $k-d \leq -1$ (or $1 \leq d-k \leq 2\streamlength-1$) as $0 \leq k \leq \streamlength-1$. Let $c=d-k$. Then 
\begin{align*}
\sum_{\ell=0}^{2\streamlength-1} \omega^{(k-d)\ell} 
    &= {1 \over \omega^{2(k-d)}}\sum_{\ell=0}^{2\streamlength-1} \omega^{(d-k)\ell} = {1 \over \omega^{-2c}}\sum_{\ell=0}^{2\streamlength-1} \omega^{c\ell} = {1 \over \omega^{2(k-d)}} \times {\omega^{2cn}-1 \over \omega^c -1 } =  0 
\end{align*}
because $\omega^c$ is a root of polynomial $z^{2\streamlength}-1$. Therefore, as in the case of $-n \leq d \leq 1$, 
we have that $a_f(\omega^{-d})=0$. 

Note that we would have arrived at the same derivation had we picked a different group and invoked Lemma~\ref{lem:powerofGenerator} instead of the properties of the roots of $z^{2\streamlength}-1$.

\subsection{Proof of \texorpdfstring{\cref{eq:complexFactorization}}{Lg}.}
\label{sec:proofComplexFactorization}
Let  $\zeta_\ell = \sqrt{m_f(\omega^\ell)}$ for brevity below. We prove  \cref{eq:complexFactorization} for 
\[
b_f(x) = c_f(x) = {1 \over 2\streamlength} \sum_{\ell=0}^{2\streamlength-1} x^\ell \paren{\sum_{k=0}^{n-1}f(k)\omega^{k\ell}}^{1/2} = {1 \over 2\streamlength}\sum_{\ell=0}^{2\streamlength-1} \sqrt{m_f(\omega^\ell)  }x^\ell
= 
 {1 \over 2\streamlength}\sum_{\ell=0}^{2\streamlength-1} \zeta_\ell x^\ell
.
\]

Recalling the definition of $a_f(x)$ in \cref{eq:polynomialrepresentationFinal}, we have \cref{eq:complexFactorization} as follows:
\begin{align*}
    \sum_{k=0}^{2\streamlength-1} b_f(\omega^k)c_f(\omega^{-k}x) & 
    = {1 \over 4\streamlength^2}  \sum_{k=0}^{2\streamlength-1} \paren{\sum_{\ell=0}^{2\streamlength-1} \zeta_\ell \omega^{k\ell}} \paren{\sum_{\ell=0}^{2\streamlength-1} \zeta_\ell x^\ell\omega^{-k\ell}} 
     = {1 \over 4\streamlength^2}\sum_{k=0}^{2\streamlength-1} \sum_{j=0}^{2\streamlength-1} \sum_{\ell=0}^{2\streamlength-1} \zeta_\ell \zeta_j \omega^{-k\ell} x^\ell \omega^{jk} \\ 
    &= {1 \over 4\streamlength^2}\sum_{\ell=0}^{2\streamlength-1} x^\ell \zeta_\ell \sum_{k=0}^{2\streamlength-1}\sum_{j=0}^{2\streamlength-1}  \zeta_j \omega^{-k\ell}  \omega^{jk} 
     = {1 \over 4\streamlength^2}\sum_{\ell=0}^{2\streamlength-1} x^\ell \zeta_\ell \sum_{k=0}^{2\streamlength-1}\sum_{j=0}^{2\streamlength-1}  \zeta_j  \omega^{(j-\ell)k} \\
    &= {1 \over 4\streamlength^2}\sum_{\ell=0}^{2\streamlength-1} x^\ell \zeta_\ell \paren{ \sum_{k=0}^{2\streamlength-1} \zeta_\ell  + \underbrace{\sum_{j=0}^{\ell-1}\zeta_j \sum_{k=0}^{2\streamlength-1}   \omega^{(j-\ell)k} + \sum_{j=\ell+1}^{2\streamlength-1}\zeta_j \sum_{k=0}^{2\streamlength-1}   \omega^{(j-\ell)k} }_{=0 \text{ by setting $g=\omega,p=2\streamlength$ in \Cref{lem:powerofGenerator}}} } \\
    &= {1 \over 4\streamlength^2}\sum_{\ell=0}^{2\streamlength-1} x^\ell \zeta_\ell \sum_{k=0}^{2\streamlength-1} \zeta_\ell  
    = {1 \over 2\streamlength} \sum_{\ell=0}^{2\streamlength-1} \zeta^2_\ell x^\ell  
     = a_f(x),
\end{align*}
where the last equality used the definition of $\zeta_\ell$.

\subsection{Proof of \texorpdfstring{Theorem~\ref{thm:mainupperboundgamma}}{Lg}.}
\label{sec:proofMainUpperBoundGamma}
We will now show that the algorithm  {\scshape Compute-Factor} in \Cref{alg:factorization} fulfills the requirements of the theorem.
Let $\zeta_k = \sqrt{m_f(\omega^k)}$, so $\zeta_k \zeta_k^* = |m_f(\omega^k)|$. As $\widetilde L = \widetilde R^*$, $\norm{\widetilde L}_{2 \to \infty} =  \norm{\widetilde R}_{1 \to 2}$, and, thus, it suffices to upper bound $\rownorm{\widetilde L}$.   
Recall that $\widetilde L$ is a group-pattern matrix for the group $G$ with the function $b_f(x)$. Therefore,  
$\widetilde L[i,j] = b_f(\omega^{j-i}).$ 
Recall that  $\decayingmatrix= \widetilde L \widetilde R^*$. 

\begin{claim}
    \label{claim:equal}
    For any $1\leq j \leq \streamlength$, we have 
    \[\norm{\widetilde R[:,j]}_2^2 = {1 \over 2\streamlength} \sum_{\ell=0}^{2\streamlength-1} |m_f(\omega^\ell)|.\]
\end{claim}
\begin{proof}
    First note that, using the cyclic property of $\omega^i$, i.e., $\omega^i=\omega^{i+2k\streamlength}$ for all integers $k$. Therefore, 
    \begin{align*}
    \norm{\widetilde R[:j]}_2^2 &=
    \sum_{i=0}^{2\streamlength-1} |b_f(\omega^{j-i})|^2 =     {1 \over 4\streamlength^2} \sum_{i=0}^{2\streamlength-1} \left| \sum_{\ell=0}^{2\streamlength-1} \sqrt{m_f(\omega^\ell)} \omega^{(j-i)\ell)} \right|^2 = {1 \over 4\streamlength^2} \sum_{i=0}^{2\streamlength-1} \left| \sum_{\ell=0}^{2\streamlength-1} \zeta_\ell \omega^{(j-i)\ell} \right|^2 \\
    & ={1 \over 4\streamlength^2} \sum_{i=0}^{2\streamlength-1} \paren{\sum_{\ell=0}^{2\streamlength-1} \zeta_\ell \omega^{(j-i)\ell}} \paren{ \sum_{\ell=0}^{2\streamlength-1} \zeta_\ell \omega^{(j-i)\ell} }^* 
   = {1 \over 4\streamlength^2}   
   \sum_{k=0}^{2\streamlength-1}\sum_{\ell=0}^{2\streamlength-1} \sum_{i=0}^{2\streamlength-1}  \zeta_k\zeta_\ell^* \omega^{(k-\ell)(j-i)}  \\
   & = {1 \over 4\streamlength^2}  \sum_{i=0}^{2\streamlength-1} 
   \sum_{\ell=0}^{2\streamlength-1} \zeta_\ell  \zeta_\ell^* + {1 \over 4\streamlength^2} \max_{j\in [\streamlength]}  
   \underbrace{\sum_{k=0}^{2\streamlength-1}\sum_{\ell \neq k}^{2\streamlength-1} \zeta_k \zeta_\ell^* \omega^{j(k-\ell)} \sum_{i=0}^{2\streamlength-1} \omega^{(\ell-k)i} }_{=0 \text{ by \Cref{lem:powerofGenerator}}}   
   = {1 \over 2\streamlength} \sum_{\ell=0}^{2\streamlength-1} |m_f(\omega^\ell)|.
    \end{align*}
    This completes the proof of \Cref{claim:equal}.
\end{proof}

\begin{remark}
    \Cref{claim:equal} shows that the $\ell_2$-norm of all the columns of $\widetilde R$ are equal. Therefore, dividing  $\widetilde R$ (and multiplying $\widetilde L$ by the scalar $\paren{{1 \over 2\streamlength} \sum_{\ell=0}^{2\streamlength-1} |m_f(\omega^\ell)|}^{1/2}$ gives a column normalized right factor matrix. 
    {Note that this leaves the upper bounds for the factorization norms achieved by the resulting factorization unchanged.}
    In practice, it has been noted by several works that finding factors under the constraint that the right factor is column normalized gives much better accuracy~\cite{choquette2022multi} and they use optimization to find such a factor. Ours is the first work that finds explicitly such a factor. 
\end{remark}

\noindent \textbf{Bounding the column norm, $\colnorm{\widetilde R}$.} 
Using \Cref{claim:equal}, we  have 
\begin{align*}
   \colnorm{\widetilde R}^2 &= \max_{1 \leq j \leq \streamlength} \norm{\widetilde R[:,j]}^2 
   = {1 \over 2\streamlength} \sum_{\ell=0}^{2\streamlength-1} |m_f(\omega^\ell)|.
\end{align*}

\noindent \textbf{Bounding the column norm, $\rownorm{\widetilde R}$.} 
We now have

\begin{align*}
   \colnorm{\widetilde R}^2 &= {1 \over 4\streamlength^2} \max_{j\in [\streamlength]}  \sum_{i=0}^{2\streamlength-1} \left| \sum_{\ell=0}^{2\streamlength-1} \zeta_\ell \omega^{(j-i)\ell)} \right|^2 
   ={1 \over 4\streamlength^2} \max_{j\in [\streamlength]}  \sum_{i=0}^{2\streamlength-1} \paren{\sum_{\ell=0}^{2\streamlength-1} \zeta_\ell \omega^{(j-i)\ell)}} \paren{ \sum_{\ell=0}^{2\streamlength-1} \zeta_\ell \omega^{(j-i)\ell)} }^* \\ 
   & = {1 \over 4\streamlength^2} \max_{j\in [\streamlength]}  \sum_{i=0}^{2\streamlength-1} 
   \sum_{k=0}^{2\streamlength-1}\sum_{\ell=0}^{2\streamlength-1} \zeta_k \omega^{(k-\ell)(j-i)} \zeta_\ell^* = {1 \over 4\streamlength^2} \max_{j\in [\streamlength]}  
   \sum_{k=0}^{2\streamlength-1}\sum_{\ell=0}^{2\streamlength-1} \sum_{i=0}^{2\streamlength-1}  \zeta_k\zeta_\ell^* \omega^{(k-\ell)(j-i)}  \\
   & = {1 \over 4\streamlength^2}  \sum_{i=0}^{2\streamlength-1} 
   \sum_{\ell=0}^{2\streamlength-1} \zeta_\ell  \zeta_\ell^* + {1 \over 4\streamlength^2} \max_{j\in [\streamlength]}  
   \underbrace{\sum_{k=0}^{2\streamlength-1}\sum_{\ell \neq k}^{2\streamlength-1} \zeta_k \zeta_\ell^* \omega^{j(k-\ell)} \sum_{i=0}^{2\streamlength-1} \omega^{(\ell-k)i} }_{=0 \text{ by \Cref{lem:powerofGenerator}}}   
   = {1 \over 2\streamlength} \sum_{\ell=0}^{2\streamlength-1} |m_f(\omega^\ell)|.
\end{align*}

\noindent \textbf{Bounding the Frobenius norm, $\norm{\widetilde R}_{\op F}$.} 
We can also estimate the $\norm{\widetilde R}_{\op F}^2$ and $\norm{\widetilde L}_{\op F}^2$, which allows us to bound the $\gamma_{(2)}$ norm (and consequently the mean squared error~\cite{henzinger2023almost}) by using \Cref{claim:equal}:
\begin{align*}
    \norm{\widetilde R}_{\op F}^2 = \sum_{j=0}^{2\streamlength-1} \norm{\widetilde R[:j]}_2^2 
    = \sum_{\ell=0}^{2\streamlength-1} |m_f(\omega^\ell)|. 
\end{align*}

In order to compute $\gamma_{\op{(p)}}$ for general $p\in [2,\infty)$, we need the following claim: 
\begin{claim}
\label{claim:traceBoundFirststep}
Let $N=\widetilde L \widetilde L^*$ for $\widetilde L$ defined above. Then 
\begin{align}
\label{eq:boundpTrace}
    N[i,i] = {1 \over 2\streamlength}   \sum_{\ell=0}^{2\streamlength-1} \left|{m_f(\omega^\ell)}  \right| \quad \text{for all}\quad 1\leq i \leq 2\streamlength.
\end{align}
\end{claim}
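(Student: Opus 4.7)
The plan is to compute $N[i,i]$ exactly via character orthogonality on the cyclic group $G$ of order $2n$; this will establish \cref{eq:boundpTrace} as an equality, which is stronger than what the claim asserts. The starting point is to unfold the matrix product. Since $\widetilde L$ is the group-pattern matrix for $G$ with function $b_f$, i.e.\ $\widetilde L[i,j]=b_f(\omega^{j-i})$, one has
\[
N[i,i] = \sum_{j=1}^{2n} |\widetilde L[i,j]|^2 = \sum_{j=1}^{2n} |b_f(\omega^{j-i})|^2.
\]

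The first step is to eliminate the dependence on $i$. Because $\omega$ has order exactly $2n$, the map $j \mapsto (j-i) \bmod 2n$ is a bijection on $\{0,1,\dots,2n-1\}$, so the sum above is independent of $i$ and equals $\sum_{k=0}^{2n-1} |b_f(\omega^k)|^2$. The second (main) step is the orthogonality calculation. Writing $\zeta_\ell = \sqrt{m_f(\omega^\ell)}$ and using the explicit formula \cref{eq:polynomialfactorFinal} for $b_f$, I will expand $|b_f(\omega^k)|^2 = b_f(\omega^k)\,\overline{b_f(\omega^k)}$ as a double sum indexed by $\ell,\ell' \in \{0,\dots,2n-1\}$ whose inner factor is $\omega^{k(\ell-\ell')}$. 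Interchanging the order of summation brings $\sum_{k=0}^{2n-1}\omega^{k(\ell-\ell')}$ to the front, and by \Cref{lem:powerofGenerator} (applied with generator $g=\omega$ and $p=2n$) this inner sum equals $2n$ when $\ell \equiv \ell' \pmod{2n}$ and vanishes otherwise. Since $\ell,\ell' \in \{0,\dots,2n-1\}$, only the diagonal $\ell = \ell'$ survives, and after canceling the factor $2n$ against the $4n^2$ produced by the two copies of $1/(2n)$ in $b_f$, the double sum collapses to
\[
N[i,i] \;=\; \frac{1}{2n}\sum_{\ell=0}^{2n-1}|\zeta_\ell|^2 \;=\; \frac{1}{2n}\sum_{\ell=0}^{2n-1}|m_f(\omega^\ell)|,
\]
which is \cref{eq:boundpTrace}.

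I do not anticipate any serious obstacle beyond careful index bookkeeping; the only subtlety is verifying that the difference $\ell - \ell'$ with $\ell,\ell' \in \{0,\dots,2n-1\}$ is a multiple of $2n$ exactly when $\ell = \ell'$, which is immediate. Note that this same orthogonality in fact also gives the claim for the preceding derivation of $\colnorm{\widetilde R}^2$: that computation is precisely $\sum_{i}N[i,i]/n$-style, so the pointwise bound on $N[i,i]$ proven here yields $\colnorm{\widetilde R}^2 = \rownorm{\widetilde L}^2 = \frac{1}{2n}\sum_{\ell=0}^{2n-1}|m_f(\omega^\ell)|$ as a direct corollary.
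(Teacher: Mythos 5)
Your proof is correct, and it takes a genuinely different route from the paper's. The paper argues term by term in $j$: it expands $|b_f(\omega^{j-i})|^2$ and then passes from $\bigl|\sum_{\ell}\sqrt{m_f(\omega^\ell)}\,\omega^{(j-i)\ell}\bigr|^2$ to $\sum_{\ell}\bigl|\sqrt{m_f(\omega^\ell)}\,\omega^{(j-i)\ell}\bigr|^2$, an inequality that is not valid pointwise for general complex summands (for a fixed $j$ the cross terms need not be nonpositive; e.g.\ for $f(0)=1$ and $f(k)=0$ otherwise it fails at $j=i$), and pointwise Cauchy--Schwarz would lose a factor $2n$. The cross terms only cancel after summing over $j$, and that is exactly what your argument does: you sum over $j$ first, use that $j\mapsto (j-i)\bmod 2n$ is a bijection, and then annihilate the off-diagonal pairs $\ell\neq\ell'$ via \Cref{lem:powerofGenerator}, i.e.\ orthogonality of the characters of the cyclic group of order $2n$. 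This Parseval-type computation gives $N[i,i]=\frac{1}{2n}\sum_{\ell=0}^{2n-1}|m_f(\omega^\ell)|$ exactly, which proves \cref{eq:boundpTrace} and additionally shows it is tight; it also rigorously yields $\colnorm{\widetilde R}^2=\rownorm{\widetilde L}^2=\max_i N[i,i]=\frac{1}{2n}\sum_{\ell}|m_f(\omega^\ell)|$ (your closing aside phrases this as an average, but since all diagonal entries of $N$ coincide, maximum and average agree, so the conclusion stands). In short, you reach the same constant, but your exact orthogonality computation is the argument that actually justifies the bound, whereas the paper's displayed chain needs the summation over $j$ to be legitimate.
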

\begin{proof}
As $\widetilde L = \widetilde R^*$ using~\Cref{claim:equal}  
we have the following:
\begin{align*}
    N[i,i] &= \norm{\widetilde L[i:]}^2_2 =  {1 \over 4\streamlength^2} \sum_{j=0}^{2\streamlength-1} \left| \sum_{\ell=0}^{2\streamlength-1} \sqrt{m_f(\omega^\ell)} \omega^{(j-i)\ell)} \right|^2    = {1 \over 2\streamlength} \sum_{\ell=0}^{2\streamlength-1} |m_f(\omega^\ell)|.
\end{align*}
which is exactly \cref{eq:boundpTrace}.
\end{proof}

Claim \ref{claim:traceBoundFirststep} immediately implies that 
\begin{align}
    \label{eq:traceBound}
    \ptrace{\widetilde L} = \paren{\sum_{i=1}^{\streamlength} N[i,i]^{p/2} }^{1/p} \leq n^{1/p}  \paren{{1 \over 2\streamlength}\sum_{\ell=0}^{2\streamlength-1} \left|m_f(\omega^\ell)\right|}^{1/2}
\end{align}

Since $L=R^*$, we have from the definition of $\gamma_{\op{(p)}}(\cdot)$ and Fact \ref{claim:monotonicityGammap},
\begin{align*}
    \gamma_{\op{(p)}}(\decayingmatrix) 
    &\leq \paren{\sum_{i=1}^\streamlength \paren{{1 \over 2\streamlength}   \sum_{\ell=0}^{2\streamlength-1} \left|{m_f(\omega^\ell)}  \right|}^{p/2}}^{1/p} \paren{{1 \over 2\streamlength} \sum_{\ell=0}^{2\streamlength-1} \left|{m_f(\omega^\ell)}  \right|}^{1/2} \\
    &= {n^{1/p} \over 2\streamlength}\paren{ \paren{\sum_{\ell=0}^{2\streamlength-1} \left|{m_f(\omega^\ell)}  \right|}^{p/2}}^{1/p} \paren{ \sum_{\ell=0}^{2\streamlength-1} \left|{m_f(\omega^\ell)}  \right|}^{1/2}
     = {n^{1/p} \over 2\streamlength}  \sum_{\ell=0}^{2\streamlength-1} \left|{m_f(\omega^\ell)}  \right|.
\end{align*}

\begin{algorithm}[t]
\caption{{\scshape Compute-Factor}}
\begin{algorithmic}[1]
   \Require A matrix $\decayingmatrix$ based on a weight function, $f:\mathbb N \to \real$.
   \Ensure Real matrices $L$ and $R$ such that $LR=\decayingmatrix$.
    \State For the weight function, $f$, define the degree-$(2\streamlength-1)$ polynomial $a_f(x) \in \complex[x]$ as in \cref{eq:polynomialrepresentationFinal} and compute a degree-$(2\streamlength-1)$ polynomial $b_f(x) \in \complex[x]$ as in \cref{eq:polynomialfactorFinal}.

    \State Let $\widetilde{L}, \widetilde{R}^*$ be the pattern matrix witnessed by $b_f(x)$ and defined in \cref{eq:widetildeL}.

    \State Define $\widetilde L_r$, $\widetilde L_c$, $\widetilde R_r$, and $\widetilde R_c$ by $\widetilde L  = \widetilde L_r + \iota \widetilde L_c  \in \complex^{\streamlength \times (2\streamlength)}$ 
    and $\widetilde R  = \widetilde R_r + \iota \widetilde R_c  \in \complex^{(2\streamlength) \times \streamlength}$.  Define $\widehat L = \begin{pmatrix}
     \widetilde L_r & \widetilde L_c   
    \end{pmatrix}  \in \real^{n \times 4n}$ and $\widehat R = \begin{pmatrix}
        \widetilde R_r & -\widetilde R_c
    \end{pmatrix}^* \in \real^{4n \times n}$ from the solution $\widetilde L$ and $\widetilde R$. 

    \State Use the trick of Denisov et al.~\cite{mcmahan2022private} (i.e., Gram-Schmidt decomposition) to compute a decomposition of $\widehat L^\top = Q^\top L^\top$ into matrices $Q^\top$ and $L^\top$ such that $Q^\top$ is orthogonal and $L^\top$ is upper triangular matrix.

    \State \textbf{Return} $L$ and $R=Q^\top \widehat R$.

   \end{algorithmic}
\label{alg:factorization}
\end{algorithm}

\section{A differentially private continual release algorithm for an adaptive adversary}
\label{sec:removingComplex}
We showed in the previous section how to decompose the group-pattern matrix $\decayingmatrix \in \mathbb C^{ \streamlength \times  \streamlength}$ into two matrices $\widetilde L \in \complex^{ \streamlength \times 2 \streamlength}$ and $\widetilde R \in \complex^{2 \streamlength \times \streamlength}$. 
Now the output $\widetilde L(\widetilde Rx+z)$  is complex-valued as $\widetilde L$ is a complex-valued matrix, and $z$ is a simple multivariate Gaussian {of dimension $2\streamlength$}.
We next discuss that if $M_f$ is a real-valued matrix, we can resolve this issue without changing the upper bound on  $\gamma_2(\decayingmatrix)$, and, hence, the additive error.

\begin{algorithm}[t]
\caption{Matrix Mechanism for Continual Observation, $\mathsf A_{\op{fact}}$}
\label{alg:factorizationmechanism}
\begin{algorithmic}[1]
   \Require A stream $(x_1,\cdots, x_\streamlength)$ of length $\streamlength$, $(\epsilon,\delta)$: privacy budget, and weight function, $f:\mathbb N \to \real$.
   \Ensure A stream estimating $\sum_{i=1}^t f(t-i) x_i$ for every time epoch $1 \leq t \leq \streamlength$.
    \State $(L,R) \gets \text{{\scshape Compute-Factor}}(\decayingmatrix)$. \Comment{\Cref{alg:factorization}}

   \For {$t$ in $1, 2, \cdots, n$}
        \State Let $L_t$ denote the first $t$ columns of the $t$-th row of $L$.
        \State On getting $x_t$, sample $z \sim N\paren{0, \sigma_{\epsilon,\delta}^2 \norm{R}_{1 \to 2}^2 \mathbb I_t}$.
        \State Output $a_t = \paren{\sum_{i=1}^t f(t-i) x_i} + \ip{L_t}{z}.$
   \EndFor
   \end{algorithmic}
\end{algorithm}

\subsection{Getting real-valued factors.}
Since both $\widetilde L$ and $\widetilde R$ are complex-valued matrices, we can decompose them into real and imaginary parts:
$\widetilde L = \widetilde L_r + \iota \widetilde L_c \quad \text{and} \quad 
\widetilde R = \widetilde R_r + \iota \widetilde R_c.$ Since $\decayingmatrix$ is a real-valued matrix, comparing the real and imaginary parts implies that 
$\widetilde L_c \widetilde R_r + \widetilde L_r \widetilde R_c = 0 \quad \text{and} \quad \widetilde L_r\widetilde R_r - \widetilde L_c \widetilde R_c = \decayingmatrix.$
 Therefore, we can consider the following real-valued factorization
\begin{align}
\label{eq:widehatFactors}
\widehat L  \widehat R = \decayingmatrix, \quad \text{where} \quad \widehat L = \begin{pmatrix}
     \widetilde L_r & \widetilde L_c   
    \end{pmatrix} \quad \text{and} \quad \widehat R= \begin{pmatrix}
        \widetilde R_r \\ -\widetilde R_c
    \end{pmatrix}.
\end{align}

Note that $\widehat L$ and $\widehat R$ are real-valued matrices. Since $|a+\iota b|^2={a^2+b^2}$ for any complex number $a+\iota b \in \complex$, it is easy to verify that $\gamma_2(\decayingmatrix) \leq \rownorm{\widehat L}\colnorm{\widehat R} = \rownorm{\widetilde L} \colnorm{\widetilde R} $. Similarly, $\mathsf{Tr}_{\op p}(\widehat L) = \mathsf{Tr}_{\op p}(\widetilde L)$. This shows that we can use the real-valued factorization $\widehat L, \widehat R$ instead of the complex-valued factorization $\widetilde L, \widetilde R$, without increasing the upper bound on $\gamma_2(\decayingmatrix)$.

\subsection{Getting a lower-triangular factors.}
\label{sec:gettingLowerTriangular}
The factorization in Henzinger et al.~\cite{henzinger2024unifying} consists of lower-triangular real-valued matrices. The matrix 
$\widehat L = \begin{pmatrix} \widetilde L_r & \widetilde L_c \end{pmatrix} \in \real^{n \times (4n)}$ 
is still a rectangular, real matrix, but we need a factorization where the left matrix is a lower-triangular matrix. This can be done by using the Gram-Schmidt orthogonalization trick of Denisov et al.~\cite{mcmahan2022private} on $\widehat L^\top$ to get two matrices $L^\top$ and $Q^\top$ such that $L^\top$ is upper-triangular and $Q^\top$ is an orthogonal matrix. It follows that $L$ is lower-triangular and that $LQ = \widehat L$. We then set $R = Q^\top \widehat R$. Again from the fact that $Q$ is an orthogonal matrix, we have $\gamma_{(p)}(\decayingmatrix) \leq \ptrace{L}\colnorm{R} = \mathsf{Tr}_{\op p}(\widehat L)\colnorm{\widehat R}$. The details of this approach are given in \Cref{alg:factorization}.

The proof that factors, $L$ and $R$, used in \Cref{alg:factorization} also satisfy the same upper bound on $\gamma_{\op{(p)}}(\decayingmatrix)$  follows from basic properties of generalized $p$-trace and absolute values of complex numbers. More formally, we show the following to complete the proof on the upper bound on $\gamma_{\op{(p)}}(\decayingmatrix)$ as required in Theorem \ref{thm:mainupperboundgamma}.
 
\begin{lem}
\label{lem:gammanormequivalence}
Let $\widetilde L,\widetilde R,L$, and $R$ be as defined in \Cref{alg:factorization}. Then for $p\in [2,\infty)$
$$\ptrace{L} \colnorm{R} =\mathsf{Tr}_{\op p}(\widetilde L) \colnorm{\widetilde R}.$$ 
\end{lem}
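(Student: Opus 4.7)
My plan is to chain the claimed equality through the intermediate real factorization $(\widehat L, \widehat R)$ from \cref{eq:widehatFactors}, establishing
\[
\ptrace{L}\,\colnorm{R} \;=\; \ptrace{\widehat L}\,\colnorm{\widehat R} \;=\; \ptrace{\widetilde L}\,\colnorm{\widetilde R}.
\]
The right-hand equality reduces to splitting a complex matrix into real and imaginary parts, while the left-hand equality follows from the Gram--Schmidt step of \Cref{alg:factorization} together with a structural property of the construction $\widetilde R = \widetilde L^*$.

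For the equality $\ptrace{\widehat L}\colnorm{\widehat R} = \ptrace{\widetilde L}\colnorm{\widetilde R}$, I would use $|z|^2 = (\operatorname{Re} z)^2 + (\operatorname{Im} z)^2$ entrywise. Writing $\widehat L = (\widetilde L_r,\widetilde L_c)$, the $i$-th diagonal entry of $\widehat L \widehat L^\top$ equals $\sum_j\bigl(\widetilde L_r[i,j]^2 + \widetilde L_c[i,j]^2\bigr) = \sum_j|\widetilde L[i,j]|^2$, which is the $i$-th diagonal entry of $\widetilde L\widetilde L^*$. Since $\ptrace{\cdot}$ only sees these diagonals, $\ptrace{\widehat L} = \ptrace{\widetilde L}$; the identical argument applied columnwise to $\widehat R = (\widetilde R_r, -\widetilde R_c)^\top$ yields $\colnorm{\widehat R} = \colnorm{\widetilde R}$. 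For the equality $\ptrace{L} = \ptrace{\widehat L}$, observe that $\widehat L = LQ$ with $QQ^\top = \I_n$, so $LL^\top = LQQ^\top L^\top = \widehat L \widehat L^\top$, and again only diagonals matter.

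The main obstacle is showing $\colnorm{R} = \colnorm{\widehat R}$ for $R = Q\widehat R$, because $Q^\top Q$ is merely the orthogonal projection onto $\operatorname{rowspan}(Q) = \operatorname{rowspan}(\widehat L)$, so a priori one only obtains $\|Q\widehat R[:,j]\|_2 \leq \|\widehat R[:,j]\|_2$, with equality iff $\widehat R[:,j]$ already lies in $\operatorname{rowspan}(\widehat L)$. The key observation that rescues equality is the self-adjoint symmetry baked into the algorithm: since $\widetilde R = \widetilde L^*$, one has $\widetilde R_r[:,j] = \widetilde L_r[j,:]^\top$ and $-\widetilde R_c[:,j] = \widetilde L_c[j,:]^\top$, so the $j$-th column of $\widehat R$ is literally the $j$-th row of $\widehat L$ viewed as a column vector. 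Every column of $\widehat R$ therefore sits tautologically in the row span of $\widehat L$, no mass is lost under the projection $Q^\top Q$, and $\colnorm{R} = \colnorm{\widehat R}$ follows. Chaining the three equalities concludes the proof; for a generic rectangular $\widehat R$ the Gram--Schmidt step would strictly shrink $\colnorm{R}$, and it is precisely the relation $\widetilde R = \widetilde L^*$ imposed in \Cref{alg:factorization} that makes this step an isometry on the columns of $\widehat R$.
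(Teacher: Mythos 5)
Your proof is correct and follows the same chain as the paper's: pass from $(\widetilde L,\widetilde R)$ to the real pair $(\widehat L,\widehat R)$ via $|a+\iota b|^2=a^2+b^2$, then from $(\widehat L,\widehat R)$ to $(L,R)$ via the QR/Gram--Schmidt step, using $LL^\top=LQQ^\top L^\top=\widehat L\widehat L^\top$ for the $p$-trace. The one place where you go beyond the paper is the step $\colnorm{R}=\colnorm{\widehat R}$: the paper simply asserts it ("as $\colnorm{\cdot}$ is the maximum $\ell_2$ norm of all the columns"), which is immediate only if one reads the decomposition as a full QR with a square orthogonal $Q$ acting on the columns of $\widehat R$. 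Under the literal Gram--Schmidt (thin QR) reading, where $Q\in\real^{n\times 4n}$ satisfies $QQ^\top=\I_n$ but $Q^\top Q$ is only the orthogonal projection onto the row span of $\widehat L$, you are right that a priori one gets only $\colnorm{R}\le\colnorm{\widehat R}$, and your observation that $\widetilde R=\widetilde L^*$ forces every column of $\widehat R$ to coincide with a (transposed) row of $\widehat L$ — hence to lie in that row span, hence to be mapped isometrically — is exactly the missing justification needed to turn the inequality into the equality the lemma claims. So this is not a different route but a more careful execution of the same one; it closes a small gap in the paper's one-line argument (note, though, that for the use made of the lemma in bounding $\gamma_{\op{(p)}}(\decayingmatrix)$, the inequality $\le$ would already suffice).
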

\begin{proof}
    Let $\widetilde L = \widetilde L_r + \iota \widetilde L_c$ be the complex decomposition of the matrix $\widetilde L$ and $\widetilde R = \widetilde R_r + \iota \widetilde R_c$ be the complex decomposition of the matrix $\widetilde R$ such that $\widetilde L_r,\widetilde L_c, \widetilde R_c, \widetilde R_r \in \real^{n \times 2\streamlength}$. Then we can define the matrices $\widehat L,\widehat R$ as in \cref{eq:widehatFactors}.    Since $|a+\iota b| = \sqrt{a^2+b^2}$, we have the following:
    \begin{align}
    \label{eq:widetildeLtowidehatL}
    \ptrace{\widehat L} = \ptrace{\widetilde L} \quad \text{and} \quad \colnorm{\widehat R} = \colnorm{\widetilde R}.    
    \end{align}

The Gram-Schmidt decomposition of $
\widehat{L}^\top$  returns an orthogonal matrix $Q^\top$ and an upper triangular matrix $L^\top$ with $(\widehat{L})^\top = Q^\top L^\top$. Thus,  $L$ is lower triangular,
$\widehat L= L Q$ and $\ptrace{L} = \ptrace{LQ} =
\ptrace{\widehat L}$, as $QQ^\top=\mathbb I$. 
We set $R = Q^\top \widehat R$  and, as $\colnorm{\cdot}$ is the maximum $\ell_2$ norm of all the columns,  we have 
\begin{align}
\label{eq:widehatLtoL}
\begin{split}
    \ptrace{L}  = \ptrace{\widehat L} \quad \text{and} \quad
    \colnorm{R} = \colnorm{Q^\top \widehat R} = \colnorm{\widehat R}.
\end{split}
\end{align}
Combining \cref{eq:widetildeLtowidehatL} and \cref{eq:widehatLtoL} completes the proof of \Cref{lem:gammanormequivalence}.
\end{proof}

\section{Bounds for Special Matrices}
\label{sec:specialMatrices}
In this section we show how to achieve improved upper bounds on the additive error $\ellperror$ for any $p \in [2,\infty)$ for continual release algorithms based on the matrix mechanism for concrete matrices $M_f$.
Specifically, we prove Corollary~\ref{cor:continualcounting}.

We start with the continual counting matrix. 
In this case, $f(t)=1$ for all $0 \leq t\leq \streamlength-1$. Therefore, $m_f(x)= 1 + x+ x^2 + \cdots + x^{\streamlength-1}$. First note that for integer $\ell$
\begin{align}
    \omega^{\ell n} = e^{\iota \ell \pi} = \cos(\ell \pi) + \iota \sin (\ell \pi) = \begin{cases}
    1 & \text{ if } \ell \mod 2 = 0 \\
    -1 &  \text{ if } \ell \mod 2  = 1
\end{cases}.
\label{eq:powerofomega}
\end{align}
i.e., $1-\omega^{\ell n} = 0$ for even $\ell$ and 2 otherwise.
Using \Cref{lem:1-omega} shows that 
\begin{align*}
    \left| 1 - \omega^\ell \right|  = 2 \left|\sin\paren{{\pi \ell \over 2\streamlength}}  e^{-\iota \left({\pi\over 2} - {\pi \ell \over 2\streamlength} \right)} \right| 2 \left|\sin\paren{{\pi \ell \over 2\streamlength}} \right|.
\end{align*}
Plugging in the bound, we get from Theorem~\ref{thm:mainupperbound}, 
\begin{align*}
    \gamma_{\op 2}(M_\counting) &\leq {1 \over 2 \streamlength} \sum_{\ell=0}^{2\streamlength-1} \left|\sum_{k=0}^{\streamlength-1}  \omega^{k\ell} \right| = {1 \over 2} + {1\over 2\streamlength} \sum_{\ell=1}^{2\streamlength-1} \left|{1 - \omega^{\ell n}\over 1 - \omega^\ell} \right| = 
    {1 \over 2} + {1 \over n} \sum_{\ell \text{ is odd}}^{2\streamlength-1}  \left|{1\over 1 - \omega^\ell} \right| \\
    &= {1 \over 2} + {1 \over 2\streamlength} \sum_{\ell \text{ is odd}}^{2\streamlength-1}  \left| {1 \over \sin\paren{{\pi \ell \over 2\streamlength}}} \right| = {1 \over 2} + \underbrace{{1 \over 2\streamlength} \sum_{\ell =1}^{n} \left| {1 \over \sin\left({\pi (2\ell-1)\over 2\streamlength} \right)} \right|}_{\gamma_\streamlength}
\end{align*} 
This is the same bound achieved by non-constructive methods in Mathias~\cite{mathias1993hadamard} (see equation (2.8) in Dvijotham, McMahan, Pillutla, Steinke, and Thakurta~\cite{dvijotham2024efficient}). 
Mathias~\cite{mathias1993hadamard} showed that the term $\gamma_\streamlength$ is an increasing sequence { in $n \in \mathbb N$}. Therefore, $\min_{n \in \mathbb N}\{ \gamma_\streamlength\}=\gamma_1$ and $\max_{n \in \mathbb N}\{ \gamma_\streamlength \} = \gamma_\infty$. Now $$\gamma_1 =  {1 \over 2} \sum_{\ell \text{ is odd}}^{1}  \left| {1 \over \sin\paren{{\pi \ell \over 2}}} \right| = {1 \over 2} \left| {1 \over \sin\paren{{\pi \over 2}}} \right| = {1\over 2}.$$ 
 
Using \Cref{lem:sumCosecantfunctionlimit}, we now have that
\[
\gamma_{\op 2}(M_\counting) \leq {1 \over 2} + {1 \over 2\streamlength} \sum_{\ell =1}^{n} \left| {1 \over \sin\left({\pi (2\ell-1)\over 2\streamlength} \right)} \right| \leq 1 +  {1 \over \pi} \ln \paren{n}.
\]

This is slightly smaller than the upper bound
of 
 $ 1 + {\ln(\streamlength) \over \pi}$ for this expression given in Dvijotham, McMahan, Pillutla, Steinke, and Thakurta~\cite{dvijotham2024efficient} as $\ln(2\streamlength/\pi)  = \ln(\streamlength) - \ln(\pi/2)< \ln(\streamlength).$

Finally, using the same calculation, Theorem \ref{thm:mainupperboundgamma} gives for any $p \in [2,\infty)$
\begin{align*}
    \gamma_{\op{(p)}}(M_\counting) = n^{1/p} \cdot \gamma_{\op 2}(M_\counting)
    \leq n^{1/p}\paren{\countingupperbound},
\end{align*}   
which is an improvement over Liu et al.  \cite{liu2024optimality}.

The bounds for the sliding window model follow similarly as above after noting that the polynomial for the sliding window model is $m_f(x) = 1+x+x^2+\cdots+x^{W-1}$.
Specifically, using the same calculation and replacing $n$ by $W$, 
Theorem \ref{thm:mainupperboundgamma} gives for any $p \in [2,\infty)$, 
\[
\gamma_{\op{(p)}}(M_\sliding)  \leq n^{1/p}\paren{\slidingupperbound},
\]
which is the first bound on $\gamma_{\op{(p)}}(M_\sliding)$ for any $p \in [2,\infty)$.

{
Finally, for the $b$-\mname matrix, with $b \in \mathbb N$, let $k \in \real$  be such that $n=kb$. 
As in the case of Fichtenberger et al. \cite{fichtenberger2023constant}, we assume below that $k$ is an integer; if that is not the case, round up $n$ to the closest multiple of $b$, which implies replacing $n$ by at most $n+b-1$ in the upper bound on $\gamma_{\op 2}(M_\multi)$.
We have $M_\multi = M'_\counting \otimes \I_{b \times b}$, where $M'_\counting \in \set{0,1}^{\streamlength/b \times \streamlength/b}$ is the prefix sum matrix defined for a stream of length $n/b$. Using the fact that $\gamma_{\op 2}(A \otimes B) = \gamma_{\op 2}(A) \cdot \gamma_{\op 2}(B)$~\cite{haagerup1980decomposition} and $\gamma_{\op 2}(U)=1$ for any unitary matrix $U$, we have 
\[
\gamma_{\op 2}(M_\multi) = \gamma_{\op 2} (M'_\counting) \gamma_{\op 2}(\mathbb I_{b \times b}) = \gamma_{\op 2} (M'_\counting) \leq  1 + {1\over \pi} \log\paren{\streamlength \over b}.
\]
}

{
\section{Tightness of \texorpdfstring{Theorem~\ref{thm:mainupperboundgamma}}{Lg}}
\label{sec:optimality}
\label{sec:optimalityCounting}
\label{sec:slidingOptimality}
In this section, we show that Theorem~\ref{thm:mainupperboundgamma} leads to an almost tight bound for the $M_\counting$ {and $M_\multi$} matrix. We start with $\gamma_{\op{(p)}}(M_\counting)$. 
Using Liu, Upadhyay, and Zou~\cite[Lemma 16]{liu2024optimality}, we have 
\[
\gamma_{\op{(p)}}(M_\counting) \geq {1 \over n^{1-1/p}} \norm{M_\counting}_1,
\]
where $\norm{\cdot}_1$ is the Schatten-$1$ norm. It was shown in Elliot~\cite{elliott1953characteristic} that, for $M_\counting$, we have 
\[
\norm{M_\counting}_1 = \frac{1}{2} \sum_{i=1}^\streamlength \left\vert\csc \paren{\frac{(2i-1)\pi}{4\streamlength+2}} \right\vert.
\]

From this, using the calculation in Henzinger, Upadhyay, and Upadhyay~\cite[Section 5.2]{henzinger2023almost} and Liu et al.~\cite{liu2024optimality}, we have 
\[
\gamma_{\op{(p)}} (M_\counting)\geq n^{1/p}  \paren{ {2\over \pi} + {1\over \pi}\ln\paren{2\streamlength+1 \over 5} + {{\ln(2\streamlength+1) \over 2\streamlength \pi}}}.
\]

In contrast, Theorem \ref{thm:mainupperboundgamma} gives 
\[
\gamma_{\op{(p)}} \leq n^{1/p}\paren{\countingupperbound}.
\]

We next show a simple lower bound for the $\gamma_2$ norm of $M_\sliding$.
\begin{theorem}3e4
\label{thm:lowerSlidingWindow}
    Let $M_\sliding$ be the matrix as defined in \cref{eq:sliding}. Then 
    \begin{align}
    \gamma_{\op{2}}(M_\sliding) \geq \max \set{\paren{W+1 \over 2W} \sum_{i=1}^W \left| \csc \paren{(2i-1)\pi \over 2W} \right|, {\ln((2W+1)/3) + 2 \over \pi}}.
    \end{align}
\end{theorem}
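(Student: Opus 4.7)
The plan is to exhibit an appropriate principal submatrix of $M_\sliding$ whose $\gamma_{\op 2}$-norm is already known to satisfy both of the claimed bounds, and then invoke monotonicity (Fact~\ref{claim:monotonicityGammap}) to transfer the lower bound from the submatrix to $M_\sliding$ itself.

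First, I would observe that the $W\times W$ leading principal submatrix of $M_\sliding$ coincides with the $W\times W$ lower-triangular all-ones matrix $M_\counting^{(W)}$. Indeed, for indices $i,j\in\{1,\dots,W\}$ with $i\ge j$, the difference satisfies $0\le i-j\le W-1\le W$, so $M_\sliding[i,j]=1$; while for $i<j$, by definition $M_\sliding[i,j]=0$. Hence the restriction of $M_\sliding$ to the first $W$ rows and columns is exactly $M_\counting^{(W)}$. Iterating Fact~\ref{claim:monotonicityGammap} a total of $n-W$ times (peeling off one row/column at a time) yields
\[
\gamma_{\op 2}(M_\counting^{(W)}) \;\le\; \gamma_{\op 2}(M_\sliding).
\]

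Next, I would apply the two known lower bounds to the right-hand $M_\counting^{(W)}$. Theorem~\ref{thm:mathias1993hadamard} of Mathias, instantiated at dimension $n=W$, immediately gives
\[
\gamma_{\op 2}(M_\counting^{(W)}) \;\ge\; \paren{\tfrac{W+1}{2W}} \sum_{i=1}^{W} \left|\csc\paren{\tfrac{(2i-1)\pi}{2W}}\right|,
\]
which is the first term inside the max. Likewise, applying the bound of Matou\v sek et al.\ (\cref{eq:matousek2020}) with $n=W$ gives
\[
\gamma_{\op 2}(M_\counting^{(W)}) \;\ge\; \frac{\ln((2W+1)/3)+2}{\pi},
\]
which is the second term. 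Combining these two inequalities with the monotonicity step yields the desired maximum lower bound on $\gamma_{\op 2}(M_\sliding)$.

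There is really no hard step here: the only conceptual point is the identification of the continual-counting matrix as a principal submatrix of the sliding-window matrix, after which monotonicity of $\gamma_{\op 2}$ together with the cited lower bounds for $M_\counting$ does all of the work. In particular, no new analytic or algebraic estimates are needed beyond those already established in Theorem~\ref{thm:mathias1993hadamard} and \cref{eq:matousek2020}.
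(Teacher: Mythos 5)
Your proposal is correct and follows essentially the same route as the paper: identify the $W\times W$ leading principal submatrix of $M_\sliding$ with the $W\times W$ all-ones lower-triangular counting matrix, use the monotonicity of $\gamma_{\op 2}$ (Fact~\ref{claim:monotonicityGammap}) to pass the lower bound up to $M_\sliding$, and then invoke Theorem~\ref{thm:mathias1993hadamard} and \cref{eq:matousek2020} with $n=W$. Your explicit verification of the submatrix identification and the iterated application of the monotonicity fact are the only (welcome) additions of detail beyond the paper's argument.
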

\begin{proof}
    Consider the $W \times W$ submatrix of $M_\sliding$. Let it be $M_\sliding[:W,:W]$. Using Fact \ref{claim:monotonicity}, we have 
    $\gamma_{\op 2}(M_\sliding) \geq \gamma_{\op 2}(M_\sliding[:W,:W]).$  
    Now note that $M_\sliding[:W,:W]$ is equivalent to a $W \times W$ matrix of the form  $M_\counting$. Setting $n=W$ in Theorem \ref{thm:mathias1993hadamard} and \cref{eq:matousek2020} gives us the bound. 
\end{proof}

Using the monotonicity of $\gamma_{\op{(p)}}$-norm under the addition of rows and columns, we also have 
\[
\gamma_{\op{(p)}}(M_\sliding)  \geq n^{1/p}  \paren{ {2\over \pi} + {1\over \pi}\ln\paren{2W+1 \over 5} + {{\ln(2W+1) \over 2\streamlength \pi}}}.
\]

In contrast, Theorem \ref{thm:mainupperboundgamma} gives 
\[
\gamma_{\op{(p)}}(M_\sliding)  \leq n^{1/p}\paren{\slidingupperbound}.
\]

\section{Conclusion and Open Problems}
We presented a new technique for factorizing matrices and bounding the $\gamma_p$ norms of Toeplitz matrices using group algebra. We used this technique to give improved bounds for the additive error for differentially private algorithms for various weighted prefix sum problems in the continual and the sliding window setting.
However, as our approach is very general, we believe that it can be used to achieve further novel algorithms, for example, in the regime where space as well as running time is bounded, which is the setting that due to its use in differentially private learning currently receives a lot of interest (see e.g.~\cite{dvijotham2024efficient}).

Our approach, however, requires the knowledge of the streamlength $\streamlength$ in contrast to previous explicit factorizations~\cite{henzinger2023almost,henzinger2024unifying}. In practical scenarios, we ideally would like to have a factorization for an unknown length stream. A natural question is whether we can compute explicit factorization where we do not know the stream length and match (or improve) the bounds achieved in this paper.

\section{Acknowledgements}
\erclogowrapped{5\baselineskip}Monika Henzinger:  This project has received funding from the European Research Council (ERC) under the European Union's Horizon 2020 research and innovation programme (Grant agreement No.\ 101019564) and the Austrian Science Fund (FWF) grant DOI 10.55776/Z422, grant DOI 10.55776/I5982, and grant DOI 10.55776/P33775 with additional funding from the netidee SCIENCE Stiftung, 2020–2024.

Jalaj Upadhyay's research was funded by the Rutgers Decanal Grant no. 302918 and an unrestricted gift from Google. This work was done in part while visiting the Institute of Science and Technology Austria (ISTA).

The authors would like to thank Sarvagya Upadhyay for the initial discussion and feedback on the early draft of the paper. The authors would like to thank the anonymous reviewers, Brendan McMahan, Nikita Kalinin, Jingcheng Liu, and Abhradeep Thakurta for the discussions that helped improve the presentation of the final version of the paper.

\bibliographystyle{plain}
\bibliography{privacy}

\appendix

\section{Auxiliary Lemmata}
\begin{fact}
    [Half angle formula]
    \label{lem:halfAngle}
    Let $\theta \in \real$. Then 
    $\cos(2\theta) =  1-2\sin^2(\theta) $ and $
        \sin(2\theta) = 2\sin(\theta) \cos(\theta).$
\end{fact}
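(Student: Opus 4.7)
The plan is to prove the two half-angle identities by specializing the standard angle addition formulas for sine and cosine, which are both textbook facts. First I would recall the addition identities
\[
\cos(A+B) = \cos A \cos B - \sin A \sin B, \qquad \sin(A+B) = \sin A \cos B + \cos A \sin B.
\]
These can be derived, for instance, from Euler's formula $e^{i\theta} = \cos\theta + i\sin\theta$ by expanding $e^{iA} \cdot e^{iB} = e^{i(A+B)}$ and comparing the real and imaginary parts, or alternatively from the power series definitions of $\sin$ and $\cos$.

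Next I would specialize to $A = B = \theta$. The sine addition formula immediately yields $\sin(2\theta) = 2 \sin(\theta)\cos(\theta)$, which is the second claimed identity. The cosine addition formula gives $\cos(2\theta) = \cos^2(\theta) - \sin^2(\theta)$. To match the stated form, I would invoke the Pythagorean identity $\cos^2(\theta) + \sin^2(\theta) = 1$, which follows from $|e^{i\theta}|^2 = 1$, to substitute $\cos^2(\theta) = 1 - \sin^2(\theta)$. This yields $\cos(2\theta) = 1 - 2\sin^2(\theta)$, as required.

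There is no genuine obstacle here; the fact is a classical elementary identity included in the appendix purely as a reference for downstream trigonometric manipulations (for example, the computation $|1 - \omega^\ell| = 2|\sin(\pi\ell/(2n))|$ that appears in the bound for $\gamma_{\op 2}(M_\counting)$ in Section~\ref{sec:specialMatrices}). The entire argument amounts to applying Euler's formula once and then invoking $\cos^2 + \sin^2 = 1$.
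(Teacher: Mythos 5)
Your derivation is correct: specializing the addition formulas (obtained from Euler's formula) to $A=B=\theta$ and then using $\cos^2(\theta)+\sin^2(\theta)=1$ gives exactly the two stated identities. The paper records this as a classical fact without proof, so your argument is precisely the standard justification one would supply, and nothing further is needed.
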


\begin{lem}\label{lem:1-omega}
    Let $\omega$ be the generator of the set of unit roots of order $2\streamlength$ and let $\ell$ be an integer. Then
    $$1 - \omega^\ell = 2\sin\paren{{\pi \ell \over 2\streamlength}}  e^{-\iota \left({\pi\over 2} - {\pi \ell \over 2\streamlength} \right)}.$$
\end{lem}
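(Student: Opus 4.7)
The identity is a standard Euler-formula manipulation, so the plan is direct. I would first substitute the explicit form of the generator, $\omega = e^{\iota \pi/n}$, giving $\omega^\ell = e^{\iota \pi \ell / n}$, and then rewrite the difference $1 - \omega^\ell$ by factoring out the ``midpoint'' phase $e^{\iota \pi \ell/(2n)}$:
\[
1 - e^{\iota \pi \ell/n} \;=\; e^{\iota \pi \ell/(2n)}\bigl(e^{-\iota \pi \ell/(2n)} - e^{\iota \pi \ell/(2n)}\bigr).
\]
This is the symmetrization trick that converts a difference of exponentials into a sine.

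Next, applying Euler's identity $e^{\iota\theta} - e^{-\iota\theta} = 2\iota\sin\theta$ with $\theta = \pi\ell/(2n)$, the parenthesized factor becomes $-2\iota \sin(\pi\ell/(2n))$, so
\[
1 - \omega^\ell \;=\; -2\iota\, \sin\!\Big({\pi \ell \over 2n}\Big)\, e^{\iota \pi \ell/(2n)}.
\]
The last step is to absorb the factor $-\iota$ into the exponential using $-\iota = e^{-\iota \pi/2}$, which combines with $e^{\iota \pi \ell/(2n)}$ to produce the phase $e^{-\iota(\pi/2 - \pi\ell/(2n))}$ in the statement. This yields the claimed form.

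There is no real obstacle here: the only thing to be careful about is the sign of $\iota$ when moving it into the exponential (i.e., $-\iota = e^{-\iota\pi/2}$ rather than $e^{+\iota\pi/2}$) and the fact that the generator has the specific form $e^{\iota\pi/n}$ (not $e^{2\iota\pi/(2n)}$ written differently), so that $\omega^\ell = e^{\iota\pi\ell/n}$ and the half-angle inside the sine is $\pi\ell/(2n)$, matching the statement exactly. The identity holds for any integer $\ell$ (positive, negative, or zero) since the manipulation uses only properties of the complex exponential, which is what the proof of Theorem~\ref{thm:mainupperboundgamma} in Section~\ref{sec:specialMatrices} requires when applied to the counting and sliding-window matrices.
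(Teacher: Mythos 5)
Your proof is correct, and it takes a slightly different route from the paper's. You factor the midpoint phase out of the exponential form, writing $1-e^{\iota\pi\ell/n}=e^{\iota\pi\ell/(2n)}\bigl(e^{-\iota\pi\ell/(2n)}-e^{\iota\pi\ell/(2n)}\bigr)$, apply $e^{-\iota\theta}-e^{\iota\theta}=-2\iota\sin\theta$ once, and absorb $-\iota=e^{-\iota\pi/2}$ into the phase; the only trigonometric input is Euler's formula. The paper instead expands $1-\omega^\ell=1-\cos(\pi\ell/n)-\iota\sin(\pi\ell/n)$ and invokes the half-angle identities of Fact~\ref{lem:halfAngle}, namely $1-\cos(\pi\ell/n)=2\sin^2\paren{\pi\ell/(2n)}$ and $\sin(\pi\ell/n)=2\sin\paren{\pi\ell/(2n)}\cos\paren{\pi\ell/(2n)}$, and then recognizes $\sin\paren{\pi\ell/(2n)}-\iota\cos\paren{\pi\ell/(2n)}$ as $e^{-\iota(\pi/2-\pi\ell/(2n))}$. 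The two arguments are equivalent in content; yours is shorter and sidesteps the half-angle bookkeeping, while the paper's stays in real/imaginary coordinates, matching the style of the surrounding computations for $M_\counting$. Your closing remarks are also the right ones to make: the manipulation is valid for every integer $\ell$ (both sides vanish when $\ell$ is a multiple of $2n$), and the sign in $-\iota=e^{-\iota\pi/2}$ is the one place a sign error could creep in, which you handle correctly.
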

\begin{proof}
Recall that $\sin^2(\ell \pi/(2\streamlength)) + \cos^2 (\ell \pi/(2\streamlength))=1$, and Fact \ref{lem:halfAngle} shows that
$\cos(\ell \pi/n) = \cos^2(\ell\pi/(2\streamlength)) - \sin^2(\ell \pi/(2\streamlength))$, and 
$\sin(\ell \pi/n) = 2\cos(\ell\pi/(2\streamlength)) \sin(\ell\pi/(2\streamlength))$. Next 
\begin{align*}
\sin \paren{{\pi \ell \over 2\streamlength}} &= \cos\left({\pi\over 2} - {\pi \ell \over 2\streamlength} \right) =\cos\left(-\left({\pi\over 2} - {\pi \ell \over 2\streamlength} \right) \right)  \quad \text{and} \\\ 
\cos\paren{{\pi \ell \over 2\streamlength}} &= \sin\left({\pi\over 2} - {\pi \ell \over 2\streamlength} \right) =-\sin\left(-\left({\pi\over 2} - {\pi \ell \over 2\streamlength} \right) \right).    
\end{align*}

Therefore, we can write 
\[
\sin \paren{{\pi \ell \over 2\streamlength}} - \iota \cos\paren{{\pi \ell \over 2\streamlength}} = \cos\left(-\left({\pi\over 2} - {\pi \ell \over 2\streamlength} \right) \right) + \iota \sin\left(-\left({\pi\over 2} - {\pi \ell \over 2\streamlength} \right) \right)= e^{-\iota \left({\pi\over 2} - {\pi \ell \over 2\streamlength} \right)}.
\]

Recall that $\omega = \cos(\pi/n) + \iota \sin(\pi/n)$. \Cref{lem:1-omega} follows from following calculation:
    \begin{align*}
    1 - \omega^\ell &= 1 - \cos(\pi \ell/n) - \iota \sin(\pi \ell/n) \\
    & =
     \sin^2\paren{{\pi \ell \over 2\streamlength}} + \cos\paren{{\pi \ell \over \streamlength}} + \sin^2\paren{{\pi \ell \over 2\streamlength}}  - \cos(\pi \ell/n) - \iota \sin(\pi \ell/n)
    \\
    &=2 \sin^2\paren{{\pi \ell \over 2\streamlength}} - 2 \iota \sin\paren{{\pi \ell \over 2\streamlength}} \cos\paren{{\pi \ell \over 2\streamlength}} 
    = 2\sin\paren{{\pi \ell \over 2\streamlength}}  e^{-\iota \left({\pi\over 2} - {\pi \ell \over 2\streamlength} \right)}.
\end{align*}
\end{proof}

Mathias~\cite{mathias1993hadamard} claimed the following. We show it for the sake of completion. 
\begin{lem}
\label{lem:sumCosecantfunctionlimit}
    We have the following
    \[
    {1 \over 2\streamlength} \sum_{\ell= 1}^{\streamlength}  \left| {\csc\paren{{(2\ell-1)\pi  \over 2\streamlength}}} \right|  \leq {1 \over 2} +  {1 \over \pi } \ln \paren{n}. 
    \]
\end{lem}
\begin{proof}
Since for any $\theta$,  $1 + |\cot(\theta)| \ge 1 + {\cos^2(\theta) \over |\sin (\theta)|  } = 
{|\sin(\theta)| + \cos^2(\theta) \over |\sin  (\theta)|} \geq {1 \over |\sin (\theta)|} = |\csc(\theta)|$, we have
\begin{align*}
{1 \over 2\streamlength}\sum_{\ell= 1}^{\streamlength}  \left| \csc\paren{{(2\ell-1)\pi  \over 2\streamlength}} \right| 
    &\leq {1 \over 2} + {1 \over 2\streamlength}\sum_{\ell= 1}^{\streamlength}  \left|  \cot\paren{{(2\ell-1)\pi  \over 2\streamlength}} \right| 
\end{align*}

To get the bound, we have to bound 
\[
\gamma(n):= {1 \over 2\streamlength} \sum_{\ell= 1}^{\streamlength}  \left| {\csc\paren{{(2\ell-1)\pi  \over 2\streamlength}}} \right| .
\]
 Mathias~\cite{mathias1993hadamard} has shown that $\gamma(n)$ is an increasing function. Therefore, an upper bound follows if we bound $\lim_{n\to \infty} {\gamma(n) \over \ln(n)}$ since $\gamma(1)={1/2}$. As $\streamlength \to \infty$, summation can be replaced by the Reimman integral. Therefore, 
\begin{align}
\label{eq:limits}
     \lim_{\streamlength \to \infty}{1 \over 2\streamlength {\ln(n)}}\sum_{\ell= 1}^{\streamlength}  \left|  \cot\paren{{(2\ell-1)\pi  \over 2\streamlength}} \right| =  {1 \over 2\streamlength{\ln(n)}}\int\limits_{1}^{\streamlength}  \left|  \cot\paren{{(2x-1)\pi  \over 2\streamlength}} \right| \mathsf{d}x.
\end{align}

Using the fact that $\mathsf d(\sin (\theta)) = \cos(\theta) \mathsf{d}\theta$, we have   
\begin{align*}
{1 \over 2\streamlength\ln(n)}\int\limits_{1}^{\streamlength}  \left|  \cot\paren{{(2x-1)\pi  \over 2\streamlength}} \right| \mathsf{d}x
    &= {1 \over 2\pi \ln(n) } \sparen{\op{sign}\paren{\cot\paren{(2x-1)\pi \over 2\streamlength}} \ln \left|\sin \paren{(2x-1)\pi \over 2\streamlength} \right| }_{1}^n   
\end{align*}

When $x=n$, $$\op{sign}\paren{\cot\paren{(2x-1)\pi \over 2\streamlength}} = \op{sign}\paren{\cot\paren{\pi  - {\pi \over 2\streamlength}}} = -1.$$ 

On the other hand, when $x=1$, 
$$\op{sign}\paren{\cot\paren{\pi \over 2\streamlength}} = 1.$$ 

Further, $\sin \paren{(2\streamlength-1)\pi \over 2\streamlength} = \sin \paren{\pi  - {\pi \over 2\streamlength}} =   \sin \paren{\pi \over 2\streamlength}$. 
Thus,  
$$\left|\sin \paren{(2\streamlength-1)\pi \over 2\streamlength } \right| = \left|\pm \sin \paren{\pi \over 2\streamlength} \right| = \sin \paren{\pi \over 2\streamlength}$$ and, 
as $0 \le {\pi  \over 2\streamlength} \leq {\pi \over 2}$. Thus,
\begin{align*}
{1 \over 2\streamlength \ln(n)}\int\limits_{1}^{\streamlength}  \left|  \cot\paren{{(2x-1)\pi  \over 2\streamlength}} \right| \mathsf{d}x
    & = {1 \over 2\pi \ln(n) } \paren{- \ln \sin \paren{\pi \over 2\streamlength} - \ln \sin \paren{\pi \over 2\streamlength}  }      
    =  {1 \over \pi \ln(n)} \ln \paren{\csc\paren{\pi \over 2\streamlength} }.  
\end{align*}
  
The Taylor series expansion of $\ln(\csc(x))$ around $x=0$ is  
\[
\ln(\csc(x)) = \ln(1/x) + {x^2 \over 6} + {x^4 \over 180} + {x^6 \over 2835} + O(x^8). 
\]
This implies that 
\begin{align*}
{1 \over 2\streamlength \ln(n)}\int\limits_{1}^{\streamlength}  \left|  \cot\paren{{(2x-1)\pi  \over 2\streamlength}} \right| \mathsf{d}x &= {1\over \pi \ln(n)} \paren{ \ln(2\streamlength/\pi) + {\pi^2 \over 24n^2} +  O\paren{1\over n^4} } 
\\ & = {1\over \pi \ln(n)} \paren{\ln(n) + \ln(2/\pi) + {\pi^2 \over 24n^2} +  O\paren{1\over n^4} }
\end{align*}
Therefore, as $n \to \infty$, we have   
\begin{align*}
\lim_{n\to \infty} {1 \over 2\streamlength{\ln(n)} }\int\limits_{1}^{\streamlength}  \left|  \cot\paren{{(2x-1)\pi  \over 2\streamlength}} \right| \mathsf{d}x
    &\to {1 \over \pi } 
\end{align*}
completing the proof.
\end{proof}

\end{document}